\newtheorem{theorem}{Theorem}
\newtheorem{lemma}[theorem]{Lemma}
\newtheorem{algorithm}{Algorithm}
\newtheorem{definition}{Definition}
\newtheorem{problem}{Problem}
\newtheorem{example}{Example}
\newenvironment{proof}[1][Proof]{\begin{trivlist}
\item[\hskip \labelsep {\bfseries #1}]}{\end{trivlist}}
\newcommand{\qed}{\nobreak \ifvmode \relax \else
      \ifdim\lastskip<1.5em \hskip-\lastskip
      \hskip1.5em plus0em minus0.5em \fi \nobreak
      \vrule height0.75em width0.5em depth0.25em\fi}
\begin{document}

%
\title{Coding for Improved Throughput Performance in Network Switches}

\author{Rami~Cohen,~\IEEEmembership{Graduate Student Member,~IEEE,}
        and~Yuval~Cassuto,~\IEEEmembership{Senior Member,~IEEE}
\thanks{The authors are with the Department of Electrical Engineering,
Technion - Israel Institute of Technology, Haifa, Israel (email: rc@campus.technion.ac.il, ycassuto@ee.technion.ac.il)}
\thanks{Parts of this work were presented at the 2015 IEEE International Symposium on Information Theory (ISIT), Hong Kong, China.}
\thanks{This work was supported in part by the Israel Science Foundation, by the Intel ICRI-CI center, and by the Israel Ministry of Science and Technology.}
}


\maketitle
\maketitle

\begin{abstract} 
Network switches and routers need to serve packet writes and reads at rates that challenge the most advanced memory technologies. As a result, scaling the switching rates is commonly done by parallelizing the packet I/Os using multiple memory units. For improved read rates, packets can
be coded upon write, thus giving more flexibility at read time to achieve higher utilization of the memory units. This paper presents a detailed study of coded network switches, and in particular how to design them to maximize the throughput advantages over standard uncoded switches. Toward that objective the paper contributes a variety of algorithmic and analytical tools to improve and evaluate the throughput performance. The most interesting finding of this study is that the placement of packets in the switch memory is the key to both high performance and algorithmic efficiency. One particular placement policy we call "design placement" is shown to enjoy the best combination of throughput performance and implementation feasibility.

\end{abstract}


%
\IEEEpeerreviewmaketitle

\section{Introduction}
%


With the increasing demand for network bandwidth, network switches (and routers) face the challenge of serving growing data rates. Currently the most viable way to scale switching rates is by parallelizing the writing and reading of packets between multiple memory units (MUs) in the switch fabric. However, this introduces the problem of {\em memory contention}, whereby multiple requested packets need to access the same bandwidth-limited MUs. Our ability to avoid such contention in the write stage is limited, as the reading schedule of packets is not known upon arrival of the packets to the switch. Thus, efficient packet placement and read policies are required, such that memory contention is mitigated.

For greater flexibility in the read process, coded switches introduce \textit{redundancy} to the packet-write path. This is done by calculating additional coded chunks from an incoming packet, and writing them along with the original packet chunks to MUs in the switch memory. A coding scheme takes an input of $k$ packet chunks and encodes them into a codeword of $n$ chunks ($k \le n$), where the redundant $n-k$ chunks are aimed at providing improved read flexibility. Thanks to the redundancy, only a subset of the coded chunks is required for reconstructing the original (uncoded) packet. Thus, packets may be read even when only a part of their chunks is available to read without contention. One natural coding approach is to use $[n, k]$ \textit{maximum distance separable} (MDS) codes, which have the attractive property that \textit{any} $k$ chunks taken from the $n$ code chunks can be used for the recovery of the original $k$ packet chunks. Although MDS codes provide the maximum flexibility, we show in our results that good switching performance can be obtained even with much weaker (and lower cost) codes, such as binary cyclic codes.

In the coded switching paradigm we propose in this paper, our objective is to maximize the number of full packets read from the switch memory simultaneously in a read cycle. The packets to read at each read cycle are specified in a request issued by the control plane of the switch. As we shall see, coding the packets upon their write can significantly increase the number of read packets, in return to a small increase in the write load to store the redundancy. Thus coding can significantly increase the overall switching throughput. In this paper we identify and study two key components for high-throughput coded switches: 1) {\em Read algorithms} that can recover the maximal number of packets given an arbitrary request for previously written packets, and 2) {\em Placement policies} determining how coded chunks are placed in the switch MUs. Our results contribute art and insight for each of these two components, and more importantly, they reveal the tight relations between them. At a high level, the choice of placement policy can improve both the {\em performance} and the {\em computational efficiency} of the read algorithm. To show the former, we derive a collection of analysis tools to calculate and/or bound the performance of a read algorithm given the placement policy in use. For the latter, we show a huge gap between an NP-hard optimal read problem for one policy ({\em uniform} placement), and extremely efficient optimal read algorithms for two others ({\em cyclic} and {\em design} placements).

The use of coding for improved memory read rates joins a large body of recent work aimed at objectives of a similar flavor, see e.g. the survey in \cite{Dimakis}. In \cite{Joshi, Joshi2}, the effect of MDS coding on content download time was analyzed for two content access models, where an improvement in performance was achieved. In \cite{Liang}, latency delay was reduced by choosing MDS codes of appropriate rates. Latency comparison between a simple replication scheme and MDS codes was pioneered by Huang et al. \cite{Huang} using queuing theory. It was shown that for $k=2$, the average latency for serving a packet decreases significantly when a certain scheduling model is used. This analysis was later extended by Shah et al. in \cite{Shah, Shah2}, where bounds on latency performance under multiple scheduling policies were investigated. In~\cite{switch_codesISIT13} and then in~\cite{switch_codesISIT15}, switch coding is done under a strong model guaranteeing simultaneous reconstruction of worst-case packet requests.

This paper is structured as follows. In Section \ref{sec:problem_formulation}, we provide the {\em switch setting} and define formally the problem of maximizing the read throughput. We choose a simple model of a shared-memory switch, which allows defining a clean {\em throughput-optimization} problem. It is important to note that all the paper's results demonstrated on this simple model can be extended to more realistic setups, with the same underlying ideas at play.  In Section \ref{sec:placement_policies}, we define a {\em full-throughput} instance as one in which the switch is able to read all the requested packets in the same read cycle. Full-throughput instances are the most desired operation mode for a switch, because there is no need for queueing unfulfilled packet requests. We derive necessary and sufficient conditions for an instance to be full-throughput, and specify placement policies motivated by these conditions. Read algorithms are provided in Section \ref{sec:read_algorithms} for maximizing the instantaneous throughput at a read cycle. For the cyclic and design placements we show efficient polynomial time optimal read algorithms, which are also practical enough to implement in a switch environment. Probabilistic analysis of the average read throughput is provided in Section \ref{sec:prob_analysis}. We derive an upper bound for the uniform placement, a lower and an upper bound for the cyclic placement, and exact full-throughput analysis for the design placement. Simulations results are given in Section \ref{sec:simulations}. Finally, the paper is concluded in Section \ref{sec:conclusions}.

\section{Problem Setting and Formulation}
\label{sec:problem_formulation}

Given our objective to improve the switch throughput, we now define the system setting for our proposed solution, and pose the throughput-maximization problem within this setting.

\subsection{The switch setting}
Consider a switch composed of $N$ parallel memory units (MUs) serving writes and reads of incoming and outgoing packets, respectively. Each MU is capable of storing $B$ bits on a write cycle, and retrieving $B$ bits on a read cycle. A data packet is of fixed size $W>B$ bits, that is, too large to fit in a single MU. Assuming for simplicity that $W/B$ is an integer, an incoming packet is partitioned into $k \buildrel \Delta \over = W/B$ chunks, each stored in a distinct MU on the same write cycle. Upon read request of a packet, all $k$ chunks of the packet need to be retrieved by the $k$ MUs storing it, after which it can be delivered to the output port. Because there are multiple packet requests pending on the same MUs simultaneously, contention may occur between chunks of different packets stored in the same MUs.

To reduce the amount of contention in packet reading, we propose in this paper to encode the incoming packets with an $[n,k]$ code, which means that the $k$ chunks of the data packet are encoded to $n \ge k$ chunks. The $n$ encoded chunks are of size $B$ bits each, and they are stored in $n$ distinct MUs out of the $N$ MUs in the system ($1 \le k \le n \le N$). Between packets overlap is allowed, i.e., chunks of two or more packets may share one or more MUs. For the code we mostly\footnote{Part of our results in the sequel do not require the code to have such a strong property.} assume the maximum distance separable (MDS) property, which means that any subset of $k$ chunks of the $n$ encoded chunks is sufficient for recovering the original packet. We mention here the Reed-Solomon (RS) codes \cite{Moon, Lin}, which are an important family of MDS codes widely used in storage systems for improved reliability. An RS code exists for every choice of $k \le n \le q$, where $q$ is the code alphabet size, which is a prime power. RS encoding/decoding can be performed efficiently \cite{Moon, Jheng}.

In a typical switch, a large number of packets is stored in memory at any given time. Out of these many packets, $L$ particular packets are requested at each read cycle. To maximize the read throughput, at each read cycle the switch needs to recover a maximal number of the $L$ requested packets. We next define this throughput maximization problem formally.

\subsection{The maximal-throughput read problem}\label{sec:read_problem}
A request arrives for $L$ packets, with the objective to read as many out of these packets in a single read cycle. The locations of each packet's chunks are known, and we wish to find methods for reading as many packets as possible simultaneously, with the constraint that each MU can be accessed \textit{only once} in a read cycle, delivering at most one size-$B$ chunk. An instance of the problem is illustrated in Fig. \ref{fig:nkMTP_illus}, where encoded data chunks of multiple packets appear in the same column representing an MU.
Let us denote by $L^{*}$ the maximal number of packets that can be read, out of the $L$ packets requested from the switch memory. We consider the following notion of throughput as a performance measure.
\begin{definition}(Instantaneous Throughput)
The instantaneous throughput $\rho$ of the system is defined as
\begin{equation}
\rho = \frac{{L^{*}k}}{N}.
\end{equation}
\end{definition}
That is, $\rho$ is the fraction of active MUs serving packets out of the $N$ MUs in the system, and it is a monotonically increasing function of $L^*$. Clearly $0\leq \rho\leq 1$, because the total number of read chunks cannot be more than $N$. Note that given $L$, maximizing the instantaneous throughput is equivalent to maximizing $L^{*}$, because $k$ and $N$ are constants. In the sequel we refer to the instantaneous throughput as simply {\em throughput}. Later in the paper we also discuss the {\em average throughput} $\bar{\rho}$, defined as the value of $\rho$ averaged over read cycles.

We name the problem of maximizing the throughput $\rho$ as the \textit{$[n,k]$-maximal throughput problem}, or nkMTP. Recall that for reading a packet, $k$ MUs are required, which are not used to read chunks of any other packet. Thus, an nkMTP solution amounts to finding the maximal number of \textit{disjoint} $k$-sets, leading to the following set-theory formulation of nkMTP. Consider the $N$ MUs as the elements of the set $\mathcal{S} = \left\{ {0,1,2,...,N-1} \right\}$. Each packet $i=1,2,...,L$ is stored in MUs indexed by a subset $\mathcal{S}_i$ of $\mathcal{S}$, where $\left| {{\mathcal{S}_i}} \right| = n$ and the subsets may overlap. Then nkMTP can be formulated as follows.
\begin{problem}{(nkMTP)}
\label{nkMTP_set}

\textbf{Input}: The set $\mathcal{S} = \left\{ {0,1,2,...,N-1} \right\}$ and $L$ subsets of $\mathcal{S}$, $\mathcal{S}_i \subseteq \mathcal{S}$, such that $\left| {{\mathcal{S}_i}} \right| = n$.

\textbf{Output}: Subsets $\mathcal{S}'_i \subseteq \mathcal{S}_i$ such that $\left| \mathcal{S}'_i \right| = k$, $\mathcal{S}'_i \cap \mathcal{S}'_j = \emptyset$ ($i \ne j$) and the number of subsets is maximal.
\end{problem}

\begin{example}
\label{ex1}
Consider an nkMTP instance with $N=5, L=3$ and $n=3$, where the packets are stored in the MUs indexed by the sets ${\mathcal{S}_1} = \left\{ {0,1,2} \right\}, {\mathcal{S}_2} = \left\{ {1,3,4} \right\}, {\mathcal{S}_3} = \left\{ {2,3,4} \right\}$. If $k=n=3$, at most one packet can be read, since ${\mathcal{S}_i} \cap {\mathcal{S}_j} \ne \emptyset$ for $i,j \in \left\{ {1,2,3} \right\}$. If $k=2$, a possible solution is $\mathcal{S}'_1 = \left\{ {0,1} \right\}$ and $\mathcal{S}'_2 = \left\{ {3,4} \right\}$ with $L^*=2$. Finally, if $k=1$ all the packets can be read, and one possible solution is ${{\cal S}'_1} = \left\{ 0 \right\},{{\cal S}'_2} = \left\{ 1 \right\}$ and $\mathcal{S}'_3 = \left\{ 2 \right\}$.
\end{example}

\begin{figure}[t]
\centering
\includegraphics[scale=0.65]{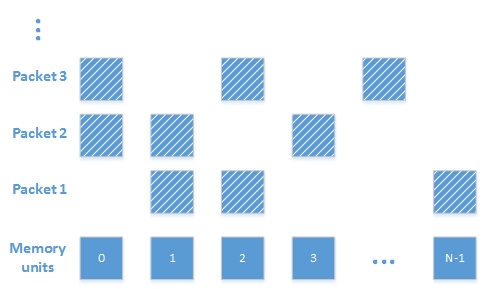}
\caption{Illustration of nkMTP. The patterned squares represent encoded data chunks ($n=3$).}
\label{fig:nkMTP_illus}
\end{figure}

An nkMTP instance can be represented as a graph as well. Consider a \textit{bipartite} graph $G = \left( {{X_G},Y_G, {E_G}} \right)$, where $X_G$ and $Y_G$ are the two disjoint sets of vertices of $G$, and $E_G$ is the set of edges of $G$. Thinking of $X_G$ as packets and of $Y_G$ as MUs, a vertex $x \in X_G$ is connected to a vertex $y \in Y_G$ if one of the encoded chunks of the packet $x$ is stored in the MU $y$. In Fig. \ref{fig_ex1}, Example $\ref{ex1}$ is represented on a graph. This graph interpretation will be used later to obtain further insights on the problem. An algorithm that guarantees maximal throughput for any instance is called an {\em optimal read algorithm}. A straightforward approach for solving an nkMTP instance is to consider all possible assignment configurations of MUs to packets. However, this approach is clearly inefficient as its complexity scales exponentially with $L$. We observe that polynomial-time optimal read algorithms exist in the general case for specific values of $k$ and $n$. These read algorithms are obtained by interpreting nkMTP for these parameters as known \textit{graph matching} problems whose efficient solutions are known.
\begin{theorem}
For $k=1, n \ge 1$ or $k=n=2$, nkMTP is solvable in polynomial time.
\label{th:k1poly}
\end{theorem}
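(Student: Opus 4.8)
The plan is to establish, for each of the two parameter regimes, a reduction of nkMTP to a matching problem already known to admit a polynomial-time solution, and to verify that the reduction preserves optima in both directions. The graph representation introduced above makes both reductions natural, so the bulk of the work is checking tightness of the correspondences rather than devising new algorithms.

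First I would handle the case $k=1$. Here a valid output assigns to each read packet a single distinct MU drawn from its set $\mathcal{S}_i$, so the disjointness constraint $\mathcal{S}'_i \cap \mathcal{S}'_j = \emptyset$ merely forbids two read packets from using the same MU. This is precisely a matching in the bipartite graph $G = (X_G, Y_G, E_G)$: selecting the edge $(x,y)$ means reading packet $x$ through MU $y$, and a set of selected edges is feasible exactly when no two share an endpoint in $Y_G$ (no MU reused) and no two share an endpoint in $X_G$ (each packet read at most once). Hence $L^{*}$ equals the size of a maximum matching in $G$, computable in polynomial time, e.g. by the Hopcroft--Karp algorithm or a max-flow formulation.

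Next I would treat $k=n=2$. Since $|\mathcal{S}_i| = n = 2$ and we require $|\mathcal{S}'_i| = k = 2$, necessarily $\mathcal{S}'_i = \mathcal{S}_i$, so reading packet $i$ commits to both of its MUs. I would build an auxiliary graph $H$ whose vertices are the MUs (the elements of $\mathcal{S}$) and whose edges are the packets, with packet $i$ contributing the edge joining the two MUs in $\mathcal{S}_i$. A collection of read packets has pairwise-disjoint MU-pairs if and only if the corresponding edges form a matching in $H$, so again $L^{*}$ equals the size of a maximum matching, now in $H$.

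The key point to flag is that $H$ is in general \emph{not} bipartite, since MUs can be shared arbitrarily across packets, so the simpler bipartite machinery does not apply; the correct tool is Edmonds' blossom algorithm, which computes a maximum matching in a general graph in polynomial time. I expect the main subtlety, rather than the reductions themselves, to be confirming that the two correspondences are tight in both directions -- that every feasible nkMTP output induces a matching of the same size and vice versa -- together with handling degenerate inputs, such as duplicate packets inducing parallel edges in $H$, of which at most one matching edge may use and which therefore do not affect the reduction. Once these equivalences are verified, polynomial solvability follows immediately from the known complexity of maximum matching.
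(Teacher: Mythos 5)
Your proposal is correct and follows essentially the same route as the paper: the $k=1$ case is reduced to maximum bipartite matching in the packet--MU graph, and the $k=n=2$ case to maximum matching in the general graph whose vertices are MUs and whose edges are packets, solvable by Edmonds' algorithm. Your added remarks on non-bipartiteness and parallel edges are sound elaborations of the same argument.
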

\begin{proof}
Consider a graph representation $G$ of an nkMTP instance. When $k=1, n \ge 1$, maximizing the throughput is equivalent to finding a \textit{maximum bipartite matching} \cite{Bondy} in $G$. That is, a subgraph of $G$ with the largest number of \textit{matched} pairs $(x,y)$, $x \in X_{G}, y \in Y_{G}$, such that each pair is connected by an edge and the edges are pairwise non-adjacent. When $k=n=2$, consider the $N$ MUs as the vertices of a (uni-partite) graph, where an edge in this graph connects two MUs shared by the same packet. A maximum matching in this graph will provide the largest number of disjoint pairs of MUs, each pair serving a packet, corresponding to a maximum-throughput solution. Efficient maximum-matching algorithms are known in both cases \cite{Bondy}. \qed
\end{proof}

\begin{figure}[t]
\centering
\includegraphics[scale=0.74]{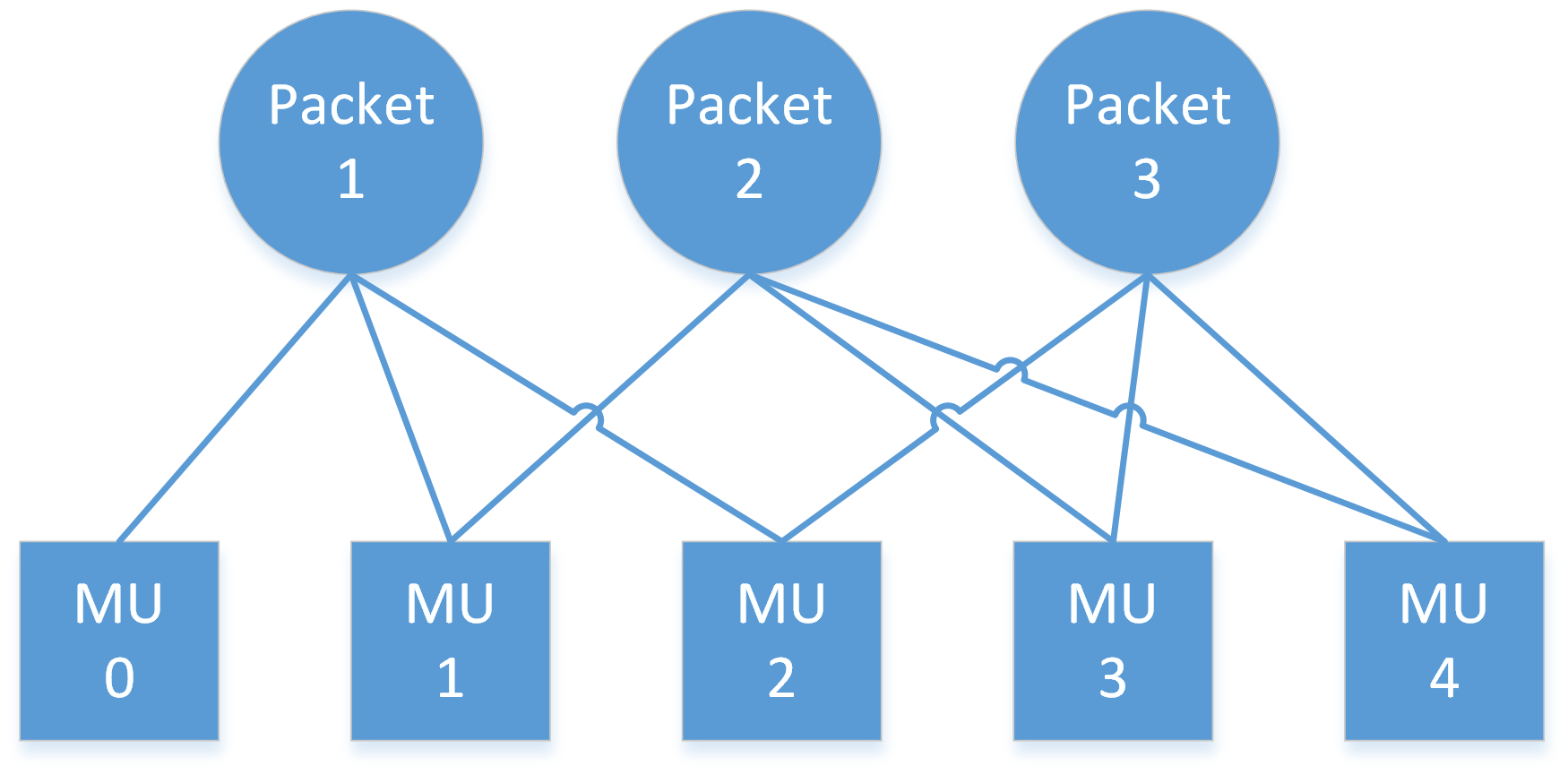}
\caption{nkMTP from Example \ref{ex1} formulated on a graph. There are three packets, each stored as $n=3$ encoded chunks in $n$ MUs.}
\label{fig_ex1}
\end{figure}

In practice, larger $k$ and $n$ values might be of interest. However, nkMTP turns out to be NP-hard in this case, as shown in the following theorem.

\begin{theorem}
nkMTP is NP-hard for $3 \le k \le n$.
\label{np_hard}
\end{theorem}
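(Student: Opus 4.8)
The plan is to reduce from 3-dimensional matching (3DM), which is well known to be NP-complete: given a ground set partitioned into three $m'$-element classes $X',Y',Z'$ and a collection $T'$ of triples, decide whether $T'$ contains $m'$ pairwise-disjoint triples (a perfect matching). The guiding observation is the case $k=n=3$, where each triple becomes a set $\mathcal{S}_i$; since a read must use all three chunks, disjoint reads correspond exactly to disjoint triples and $L^{*}$ equals the maximum 3DM. The real work is to make this survive for every \emph{fixed} pair $3\le k\le n$, where a read selects only $k$ of the $n$ stored chunks and therefore enjoys a slack $s=n-k$ that could be exploited to dodge contention.

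First I would handle the two padding dimensions separately. To lift each triple-set from size $3$ to the required size $k$, I append $k-3$ \emph{private} dummy MUs; being private they are always read and never create contention, so disjointness of the padded size-$k$ sets remains equivalent to disjointness of the underlying triples. To reach the full size $n$, I then append a single pool $D=\{d_1,\dots,d_s\}$ of $s$ \emph{global} dummy MUs shared by \emph{every} set, giving $|\mathcal{S}_i|=3+(k-3)+s=n$.

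The crux is a sandwich bound on the resulting instance. Writing $M$ for its maximum 3DM, the lower bound $L^{*}\ge M$ is immediate: read a maximum matching using only real elements. For the upper bound $L^{*}\le M+s$, split any feasible read into packets that use at least one MU of $D$ and packets that use none. Because each MU is read at most once and $|D|=s$, the first group has at most $s$ packets; every packet in the second group reads all $k$ chunks of its size-$k$ real set, so those sets are pairwise disjoint and their triples form a matching of size at most $M$. Summing the two groups yields $L^{*}\le M+s$. Arguing this upper bound cleanly, i.e. that the shared pool can buy at most $s$ extra packets, is the step I expect to be the main obstacle.

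Finally, since $s$ is a fixed constant while the sandwich $M\le L^{*}\le M+s$ leaves an additive ambiguity of $s$, a single 3DM instance cannot be recovered from $L^{*}$. The remaining idea is gap amplification: I would run the construction not on the given 3DM instance but on $s+1$ \emph{disjoint} copies of it, with combined target $m=(s+1)m'$. Disjointness makes the matching number additive, so a perfect matching of the original gives $M=m$ and hence $L^{*}\ge m$, whereas the absence of a perfect matching forces $M\le m-(s+1)$ and therefore $L^{*}\le M+s\le m-1$. Thus $L^{*}\ge m$ if and only if the original 3DM admits a perfect matching, and since $s+1=n-k+1$ is constant the construction is polynomial, establishing NP-hardness for every fixed $3\le k\le n$.
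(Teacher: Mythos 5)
Your proposal is correct, but it takes a genuinely different route from the paper's proof. The paper reduces from $l$-set packing with $l=k$ and handles the redundancy slack $n-k$ \emph{inductively}: for $n=k$ the two problems coincide, and each unit increment of $n-k$ is absorbed by mirroring the instance (creating a disjoint copy $\mathcal{B}_i$ of each set $\mathcal{A}_i$ over fresh elements), adjoining a \emph{single} shared element $\theta$ to every set, and asking for $2M$ disjoint $k$-subsets; a case analysis on which side of the mirror uses $\theta$ recovers a size-$M$ packing, so each step preserves an exact threshold. Iterating this $n-k$ times (each step doubling the instance, a constant overall factor of $2^{n-k}$) completes their argument. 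You instead start from 3DM, pad each triple to size $k$ with private dummies, and adjoin all $s=n-k$ shared dummies at once, proving the sandwich $M \le L^{*} \le M+s$; note that your two-group counting argument (packets touching the shared pool, of which there are at most $s$, versus packets that must read their full real $k$-set and hence induce disjoint triples) is already the clean justification you feared would be the main obstacle, so that step is not actually a gap. The additive ambiguity of $s$, which the paper never faces because its inductive step is exact, is then correctly eliminated by your gap amplification over $s+1$ disjoint copies, using additivity of the matching number. Comparing the two: your reduction is one-shot and more modular, its blowup is $n-k+1$ rather than $2^{n-k}$ (both constants for fixed parameters), and it needs hardness of packing only for size-$3$ sets (3DM), since private padding lifts $3$ to any $k$, whereas the paper invokes NP-completeness of $l$-SP for every $l \ge 3$; the paper's approach, in exchange, avoids amplification entirely because its shared-element case analysis yields an equivalence rather than a two-sided bound.
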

To prove Theorem \ref{np_hard}, we \textit{reduce} the $l$-set packing ($l$-SP) problem \cite{Hazan}, known to be NP-hard, to nkMTP. In $l$-SP, there are $L$ sets, each of size $l$, and the problem is to find the maximal number of pairwise disjoint sets. By the reduction, we basically show that an $l$-SP instance can be transformed to an nkMTP instance with any $k$ and $n$ in the range, which implies that nkMTP is at least as hard as $l$-SP. The details of the reduction are provided in Appendix \ref{np_hard_detailed_proof}. The consequence of the hardness result of Theorem \ref{np_hard} is that no efficient optimal algorithms are expected to be found for solving (an arbitrary instance of) nkMTP when $3 \le k \le n$.

Surprisingly, this hardness result does not imply the intractability of optimal coded switching. The main observation we make in this work is that {\em clever chunk placement} at the write path can yield more structured nkMTP instances, which do admit efficient optimal read algorithms. In the rest of this paper we develop algorithmic and analytic tools that reveal the interesting interplay in coded switches between packet placement, computation efficiency, and throughput performance.

%
%

\section{Full-Throughput Conditions and Placement Policies}
\label{sec:placement_policies}

In this section, we start with providing necessary and sufficient conditions for a \textit{full-throughput} solution, i.e., $L^*=L$ read packets. This is desired in practice to avoid delaying or reordering the read packets before fulfilling the read request. These conditions will be used later toward specifying packet placement policies, and analyzing the performance of read algorithms. Subsequently, we define the three policies this paper considers for placing packets in the switch memory: {\em uniform}, {\em cyclic}, and {\em design}.

\subsection{Full-throughput conditions}
\label{subsec:ft_conditions}

To find a \textit{necessary} condition for the existence of a full-throughput solution, note that each read packet requires at least $k$ MUs not used by any other packet. Thus, at least $kL$ MUs must be \textit{covered} by the requested packets, such that the following inequality
\begin{equation}
\label{coverage_condition}
\left| {\bigcup\limits_{i = 1}^L {{\mathcal{S}_i}} } \right| \ge kL
\end{equation}
must hold in any nkMTP instance with a full-throughput solution. We refer to \eqref{coverage_condition} as the \textit{coverage condition}. Note that when $k=n$ this condition (with equality) becomes sufficient as well, as the condition implies in this case that there is no contention between packets. We now move to find a \textit{sufficient} condition for the existence of a full-throughput solution. Let us extend the set notation to represent intersections of MU sets, that is, ${{\cal S}_{\cal I}} \triangleq \bigcap\limits_{j \in {\cal I}} {{{\cal S}_j}}$, for $\mathcal{I} \subseteq \left\{ {1,2,...,L} \right\}$.

\begin{theorem}
\label{th:max_overlap}
Let $\mathcal{S}_1,...,\mathcal{S}_L$ ($L \ge 2$) be the MU sets of an nkMTP instance. Then an $L^*=L$ solution exists if
\begin{equation}
\label{suff_cond}
\forall i,j:i \ne j,\hspace{3pt} \left| {{{\cal S}_i} \cap {{\cal S}_j}} \right| \le \frac{{2(n - k)}}{{L - 1}}\triangleq {t_{\max }}.
\end{equation}
\end{theorem}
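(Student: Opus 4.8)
The plan is to reduce the existence of disjoint $k$-subsets to a Hall-type covering condition on the MU sets, and then to verify that condition by means of a Bonferroni (first-order inclusion--exclusion) lower bound on the sizes of unions. The factor of $2$ appearing in $t_{\max}$ should emerge naturally from the number of pairwise intersection terms.

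First I would recast the output requirement of nkMTP as a bipartite matching problem. Introduce $k$ ``slots'' for each set $\mathcal{S}_i$, and join every slot of $\mathcal{S}_i$ to each element of $\mathcal{S}_i$. A choice of pairwise disjoint $k$-subsets $\mathcal{S}'_i \subseteq \mathcal{S}_i$ is then precisely a matching that saturates all $kL$ slots. By Hall's theorem \cite{Bondy}, such a matching exists if and only if every collection of slots has a neighborhood at least as large as itself. Since all $k$ slots of a given set $i$ share the identical neighborhood $\mathcal{S}_i$, the binding constraints arise for collections consisting of complete slot-sets; a subset $S$ of slots with support $T \subseteq \{1,\dots,L\}$ satisfies $|S| \le k|T|$ while $N(S) = \bigcup_{i\in T}\mathcal{S}_i$, so it suffices to check Hall's condition in the form
\[
\forall T \subseteq \{1,\dots,L\}: \quad \left| \bigcup_{i \in T} \mathcal{S}_i \right| \ge k|T|.
\]

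It then remains to show that the hypothesis \eqref{suff_cond} implies this covering inequality for every $T$. Here I would invoke the Bonferroni lower bound
\[
\left| \bigcup_{i \in T} \mathcal{S}_i \right| \ge \sum_{i \in T} \left| \mathcal{S}_i \right| - \sum_{\{i,j\} \subseteq T} \left| \mathcal{S}_i \cap \mathcal{S}_j \right|.
\]
Substituting $|\mathcal{S}_i| = n$ and the pairwise bound $|\mathcal{S}_i \cap \mathcal{S}_j| \le t_{\max}$, and writing $m = |T|$, the right-hand side is at least $nm - \binom{m}{2} t_{\max}$. The desired inequality $nm - \binom{m}{2}t_{\max} \ge km$ rearranges, after dividing by $m$, to $t_{\max} \le \frac{2(n-k)}{m-1}$.

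Finally I would note that the most demanding case is $m = L$, because $\frac{2(n-k)}{m-1}$ is decreasing in $m$; for this value the requirement reads exactly $t_{\max} \le \frac{2(n-k)}{L-1}$, which is the hypothesis. Consequently the covering condition holds for all $T$, Hall's condition is met, and the full-throughput solution exists. The main obstacle I anticipate is the matching reduction itself --- specifically, justifying that it suffices to test Hall's condition on unions of whole sets rather than on arbitrary slot subsets. The counting half is by contrast a direct Bonferroni estimate, and it makes transparent that the factor of $2$ in $t_{\max}$ is exactly the contribution of the $\binom{m}{2}$ pairwise-intersection terms.
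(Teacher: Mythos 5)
Your proposal is correct and follows essentially the same route as the paper's proof: a Hall-type covering condition $\left| \bigcup_{i \in T} \mathcal{S}_i \right| \ge k|T|$ for every $T$, verified by truncating inclusion--exclusion at the pairwise terms, bounding each pairwise intersection by $t_{\max}$, and observing that the binding case is $|T| = L$. The only differences are presentational: you derive the many-to-one Hall condition from classical Hall's theorem via a slot gadget where the paper cites an ``extended Hall's theorem,'' and you invoke the Bonferroni inequality where the paper asserts non-negativity of the higher-order inclusion--exclusion tail --- these are the same facts.
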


\begin{proof}
Denote by ${\Phi_{s,\mathcal{L}}}$ the sum of cardinalities of intersections of $s$ distinct sets taken from the MU sets indexed by a certain set $\mathcal{L}$
\begin{equation}
\label{phi_sL}
{\Phi _{s,\mathcal{L}}} = \sum\limits_{\mathcal{I} \subseteq \mathcal{L},\left| \mathcal{I} \right| = s} \left| \mathcal{S}_\mathcal{I}\right| .
\end{equation}
As an example, if $\mathcal{L}$ is the set $\left\{ {1,2,3} \right\}$, then ${\Phi _{2,\mathcal{L}}}$ is $\left| {{\mathcal{S}_1} \cap {\mathcal{S}_2}} \right| + \left| {{\mathcal{S}_1} \cap {\mathcal{S}_3}} \right| + \left| {{\mathcal{S}_2} \cap {\mathcal{S}_3}} \right|$. As we saw in Section \ref{sec:read_problem}, an nkMTP instance can be represented as a bipratite graph with the packets and MUs being the disjoint vertex sets. In this representation, packet vertices need to be matched to disjoint sets of $k$ MU vertices. According to the extended Hall's theorem \cite{Viderman}, all the $L$ packet vertices can be matched (i.e., an $L^* = L$ solution exists) if and only if
\begin{equation}
\label{hall_cond}
\left| {\bigcup\limits_{j \in \mathcal{L}} {{\mathcal{S}_j}} } \right| \ge k\left| \mathcal{L} \right|
\end{equation}
for every subset $\mathcal{L}  \subseteq \left\{ {1,2,...,L} \right\}$. In words, at least $k|\mathcal{L}|$ distinct MUs should be present in each $\mathcal{L}$ sub-family of the $L$ MU sets. Using the inclusion-exclusion principle, \eqref{hall_cond} is equivalent to the requirement
\begin{equation}
\label{ix1}
n|\mathcal{L}| - {\Phi_{2,\mathcal{L}}} + \sum\limits_{s = 3}^{|\mathcal{L}|} {{{\left( { - 1} \right)}^{s - 1}}{\Phi_{s,\mathcal{L}}}} \ge k|\mathcal{L}|
\end{equation}
for every $\mathcal{L} \subseteq \left\{ {1,2,...,L} \right\}$, $\left| \mathcal{L} \right| \ge 2$ (for $\left| \mathcal{L} \right|=1$, \eqref{ix1} reduces to the requirement $n \ge k$ that always holds). The sum $\sum\limits_{s = 3}^{|\mathcal{L}|} {{{\left( { - 1} \right)}^{s - 1}}{\Phi_{s,\mathcal{L}}}}$ is non-negative, as it compensates for over-subtraction of pairwise intersection cardinalities in the inclusion-exclusion process. Therefore, \eqref{ix1} holds if the inequality
\begin{equation}
\label{phi1}
{\Phi_{2,\mathcal{L}}} \le |\mathcal{L}|\left( {n - k} \right)
\end{equation}
holds for every $\mathcal{L}$. We can bound ${\Phi_{2,\mathcal{L}}}$ by bounding the pairwise intersection cardinalities
\begin{equation}
\label{phi2}
{\Phi_{2,\mathcal{L}}} =  \sum\limits_{i \ne j \subseteq \mathcal{L}} {\left| {{{\cal S}_i} \cap {{\cal S}_j}} \right|} \le {{|\mathcal{L}|} \choose 2} {\max _{i \ne j \subseteq \mathcal{L}}}\left| {{\mathcal{S}_i} \cap {\mathcal{S}_j}} \right|.
\end{equation}
Finally, combining \eqref{phi2} and \eqref{phi1}, the inequality \eqref{phi1} holds when
\begin{equation}
\label{phi3}
{\max _{i \ne j \subseteq {\cal L}}}\left| {{{\cal S}_i} \cap {{\cal S}_j}} \right| \le \frac{2(n-k)}{|\mathcal{L}|-1}.
\end{equation}
 We now observe that the condition of the theorem \eqref{suff_cond} implies \eqref{phi3} because $|\mathcal{L}|\leq L$ for every $\mathcal{L}$. \qed
\end{proof}

We refer to condition \eqref{suff_cond} as the \textit{pairwise condition}. The full-throughput coverage and pairwise conditions above will serve us later for specifying placement policies and analyzing their throughput performance. The sufficient pairwise condition will give lower bounds on average throughput, and the necessary coverage condition will give upper bounds. We next turn to specify three placement policies for the switch write path: the {\em uniform}, {\em cyclic} and {\em design} placements. In subsequent sections these placement policies are given efficient read algorithms and performance analysis.

\subsection{Uniform placement}
\label{subsec:uniform_placement}
In the first placement policy we consider, the $n$ chunks of a packet may be placed in any set of $n$ MUs taken from the $N$ MUs in the system. That is, the set of a packet MU indices can be one of the ${N \choose n}$ $n$-subsets of $\mathcal{S}=\left\{ {0,1,...,N - 1} \right\}$. We term this policy as \textit{uniform} placement, but note that no probability distribution is assumed. This placement policy is the most general as no structure is imposed on the placement of packet chunks to memory.
\begin{example} \label{example:uniform_placement}
Assume that $N=5$ and $n=3$. There are ${N \choose n} = 10$ possible MU sets when the uniform placement policy is used: $\left\{ {0,1,2} \right\},\left\{ {0,1,3} \right\},\left\{ {0,1,4} \right\},\left\{ {0,2,3} \right\},\left\{ {0,2,4} \right\},\left\{ {0,3,4} \right\},\left\{ {1,2,3} \right\},\left\{ {1,2,4} \right\},\left\{ {1,3,4} \right\}$ and $\left\{ {2,3,4} \right\}$.
\end{example}
This placement policy is convenient to implement, because it has maximal flexibility to choose MUs to write based on load and available space. However, this comes with a price, as solving efficiently an arbitrary uniform placement instance amounts to solving nkMTP, shown to be NP-hard in Section \ref{sec:problem_formulation}. Therefore, in the rest of this section we propose two additional placement policies, which will be shown later to admit efficient optimal read algorithms.

\subsection{Cyclic placement}
\label{subsec:cyclic_placement}

In the second placement policy we propose, termed as \textit{cyclic} placement, we add a structure constraint on the MUs chosen to store packet chunks. The constraint is that the possible MU sets are composed of $n$ \textit{cyclic consecutive} MU indices. The number of possible MU sets is $N$ (assuming that $n<N$), which is smaller in all non-trivial cases than the ${N \choose n}$ sets in the uniform placement. An MU set in a cyclic instance can be conveniently thought of as an \textit{arc} covering $n$ cyclic consecutive points out of $N$ points on a circle, where the points are considered as MUs. An example for a circle-arc representation of a cyclic instance is shown in Fig. \ref{fig:cyclic_nkmtp_instance}.

\begin{figure}[t]
\centering
\includegraphics[scale=0.65]{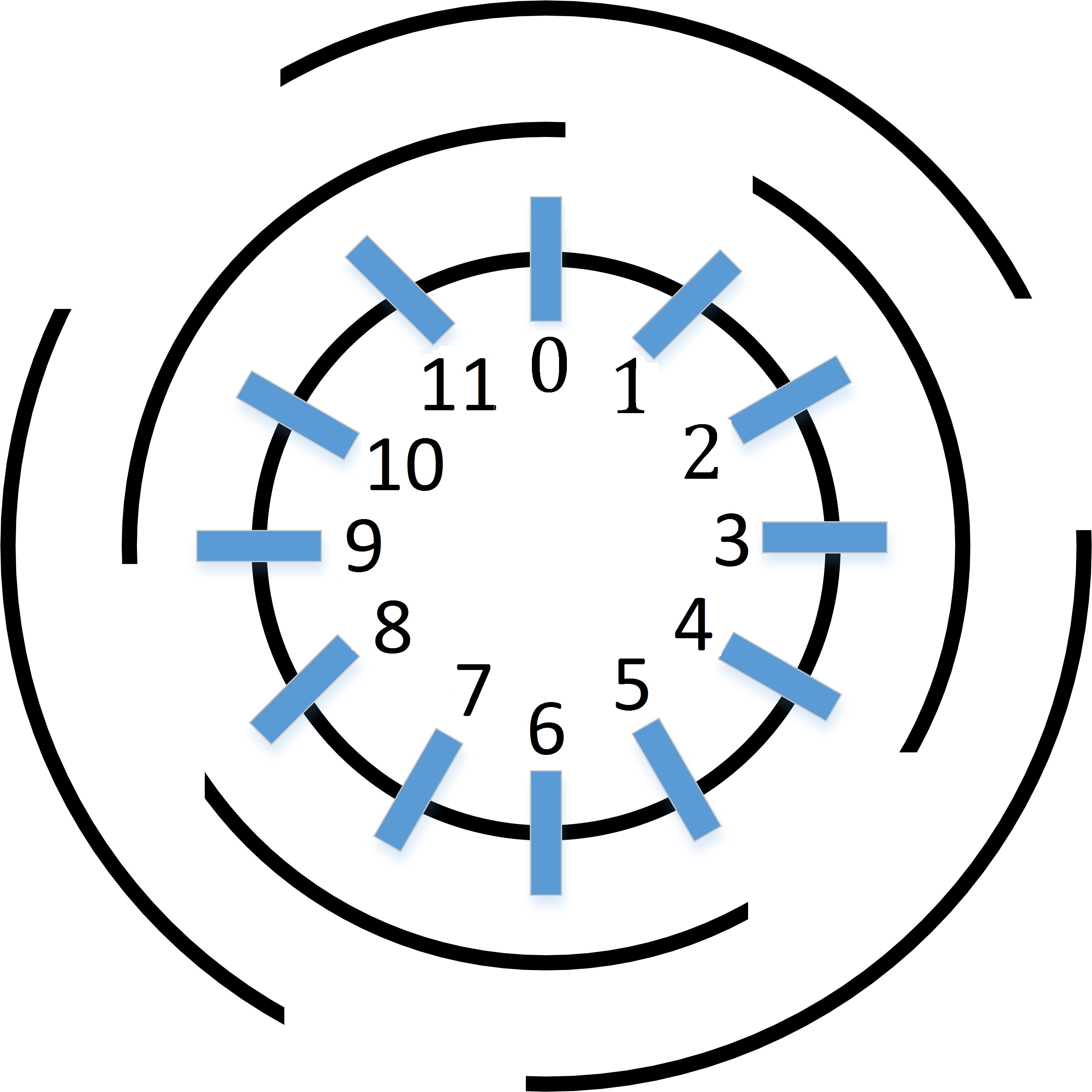}
\caption{A cyclic instance in a circle-arc representation. The marks on the inner circle represent $N=12$ MUs, where the $L=6$ outer arcs represent packets stored each in $n=4$ cyclic consecutive MUs.}
\label{fig:cyclic_nkmtp_instance}
\end{figure}

\begin{example}
Assume that $N=5$ and $n=3$. There are $N=5$ possible MU sets when the cyclic placement policy is used: $\left\{ {0,1,2} \right\},\left\{ {1,2,3} \right\},\left\{ {2,3,4} \right\},\left\{ {3,4,0} \right\}$ and $\left\{ {4,0,1} \right\}$. Note that these sets are contained in the sets of Example \ref{example:uniform_placement}.
\end{example}

A further restriction of the cyclic placement policy gives a simple placement policy where the $N$ MUs are statically partitioned to $N/n$ disjoint sets of $n$ consecutive MUs (assuming that $n$ divides $N$), and each packet is restricted to one of these sets. However, using the full cyclic (non partitioned) placement is beneficial for increased flexibility at the read path.


\subsection{Design placement}
\label{subsec:design_placement}

In the third policy we consider, our aim is to guarantee a full-throughput solution, i.e., $L^*=L$ read packets. Motivated by the sufficient condition of Theorem \ref{th:max_overlap}, we propose to construct a collection of MU sets with overlap at most $t_{\rm max}=2(n-k)/(L-1)$, using {\em combinatorial block designs}. To find such MU sets, we use the so called $t$-designs \cite{Lint} with carefully chosen parameters. A $t$-$(N, n, \lambda)$ design consists of $n$-element subsets (\textit{blocks}) taken from a set of $N$ elements, such that every $t$ elements taken from the set appear in exactly $\lambda$ subsets. $2$-designs are of particular interest in the literature, and they are known as \textit{balanced incomplete block design} (BIBD).

While it is not a trivial problem to construct combinatorial designs with arbitrary parameters, many design families are known within the vast literature on this topic \cite{Colbourn, Stinson}. When $\lambda= 1$, $t$-designs are known as \textit{Steiner systems}, and they contain (when exist) $b={N \choose t} / {n \choose t}$ blocks \cite{Lint}. Note the relation between $N,n,t$ and $b$, demonstrating that these values cannot be chosen arbitrarily. In general, a large value of $b$ is desired (i.e., large number of blocks) to have fewer occurrences where two requested packets use the same block as their MU set. We use the notation $t$-$(N,n)$ for Steiner systems (where $\lambda=1$ is implied). Our interest lies in block designs with $t = t_{\rm max}+1$ and $\lambda = 1$, such that the pairwise intersection cardinality is at most $t_{\rm max}$. We term a placement method where packets are constrained to such MU sets as \textit{design} placement. In such a placement, we are guaranteed the existence of an $L^*=L$ solution if the packets are placed in $L$ distinct MU sets.

\begin{example}
\label{BIBD_ex}
Consider the set $\mathcal{M} = \left\{ {0,1,2,3,4,5,6} \right\}$. Its subsets (blocks)
${\mathcal{M}_1} = \left\{ {0,1,2} \right\},{\mathcal{M}_2} = \left\{ {0,3,4} \right\},{\mathcal{M}_3} = \left\{ {0,5,6} \right\},{\mathcal{M}_4} = \left\{ {1,3,5} \right\},{\mathcal{M}_5} = \left\{ {1,4,6} \right\},{\mathcal{M}_6} = \left\{ {2,3,6} \right\}$ and ${\mathcal{M}_7} = \left\{ {2,4,5} \right\}$ form a $2$-$(7,3,1)$ BIBD (which is a Steiner system). There are  ${7 \choose 2} / {3 \choose 2}=7$ blocks in this design, known as the \textit{Fano plane} \cite{Stinson}. It can be seen that no two blocks intersect on more than one element (and each pair of elements is contained in exactly one block), such that this design can be used when $t_{\rm max}=1$ is desired. Since $n=3$ in this design, this value of $t_{\rm max}$ guarantees a solution for either $k=2$ and $L=3$ or $k=1$ and $L=5$.
\end{example}

An alternative for constructing MU sets with overlap at most $t_{\rm max}$ is the use of \textit{constant-weight codes}. A binary $\left( {N,d,n} \right)$ constant-weight code contains binary codewords of length $N$, each with $n$ non-zero coordinates, such that the Hamming distance between every two vectors (i.e., the number of coordinates in which they differ) is at least $d$. The supports (i.e., the non-zero coordinates) of the codewords form an $\left( {n - d/2 + 1} \right)$-$\left( {N,n,1} \right)$ \textit{packing} \cite{Ostergard}, in which each $\left(n - d/2 + 1\right)$-subset appears \textit{at most} once. A packing can be thought of as a relaxed version of a block design, which similarly satisfies pairwise intersection of at most $t_{\rm max}$ when setting $d = 2\left( {n - {t_{\max }}} \right)$. As a consequence, we can use $\left( {N,2\left( {n - {t_{{\text{max}}}}} \right),n} \right)$ constant-weight codes to construct MU sets with the desired $t_{\rm max}$ intersection property. As a large number of valid MU sets is desired, we are interested in constant-weight codes with the maximum possible number of codewords for the given parameters. Constructions of constant-weight codes and lower/upper bounds on the maximum number of codewords for certain parameters $N,d$ and $n$, denoted $A\left({N,d,n} \right)$, are provided e.g. in  \cite{Verdu, Agrell, Ostergard}.

\begin{example}
Consider the binary vectors of length $N=5$ with exactly $n=3$ non-zero coordinates. These vectors form an $\left( {N,1,n} \right)$ constant-weight code. The corresponding MU sets are the codeword supports, which appear in Example \ref{example:uniform_placement}.
\end{example}

\section{Read Algorithms}
\label{sec:read_algorithms}

As we saw in Section \ref{sec:problem_formulation}, nkMTP is intractable for general instances obtained when the uniform placement is used. In this section, we provide explicit and efficient optimal read algorithms for the cyclic and design policies.

\subsection{Cyclic placement}
\label{subsec:Cyclic_read}
In this subsection, we provide an efficient optimal algorithm for finding a maximum-throughput solution in the cyclic case. We start with the following important observation.

\begin{lemma}
\label{lemma_k_cons}
Assume an nkMTP instance with cyclic placement of the packet chunks. Then there exists an optimal solution where the $k$ MUs assigned to each read packet are cyclic consecutive.
\end{lemma}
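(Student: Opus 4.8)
The plan is to start from an arbitrary optimal solution --- a set $T$ of $|T| = L^*$ served packets together with pairwise disjoint $k$-subsets $A_i \subseteq \mathcal{S}_i$ --- and to transform it, without decreasing the number of served packets, into one in which every $A_i$ is a run of $k$ cyclic consecutive MUs inside the arc $\mathcal{S}_i$. Since each arc is itself an interval of $n$ consecutive indices, I index the positions of $A_i$ by their \emph{offset} $0,1,\dots,n-1$ inside $\mathcal{S}_i$, so that $A_i$ being consecutive means its $k$ offsets form a block $\{s_i,\dots,s_i+k-1\}$. I would drive the transformation with a potential function, the total offset $\Phi = \sum_{i\in T}\sum_{m\in A_i}\mathrm{off}_i(m)$, and argue that the minimizer of $\Phi$ over all optimal solutions has only consecutive selections.

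The engine of the argument is a \emph{shifting} step applied to any served packet whose selection is not consecutive. Such a packet $i$ has a \emph{hole}: an index $h\in\mathcal{S}_i\setminus A_i$ with some selected index of $A_i$ lying at a strictly larger offset. If the MU at $h$ is unused by every served packet, I move the selected MU of $i$ from that larger offset down into $h$; this keeps $A_i\subseteq\mathcal{S}_i$, preserves disjointness and the served set $T$, and strictly lowers $\Phi$, so it cannot occur at the minimizer. This disposes of the easy case and reduces the problem to holes that are occupied by another served packet $j$.

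The main obstacle is exactly this occupied-hole case: the MU at $h$ belongs to $A_j$, so filling it for $i$ requires simultaneously relocating $j$. I would resolve it by a boundary \emph{swap}, or cascading exchange, handing $h$ to $i$ and handing an appropriate boundary MU of $A_i$ to $j$ --- but this is valid only when that MU lies in the arc $\mathcal{S}_j$, which can fail when $h$ sits at an endpoint of $\mathcal{S}_j$. Making the exchange always lower $\Phi$ (and terminate) therefore needs either a careful choice of which hole and push direction to use, or a global reformulation: view the instance as an integral bipartite flow (packets with capacity $k$, MUs with capacity $1$, edges given by the arcs), whose feasibility for $T$ is governed by the extended Hall condition \eqref{hall_cond} already used in Theorem \ref{th:max_overlap}, and prove that the canonical \emph{leftmost-consecutive greedy} assignment --- process the packets in the circular order of their arc left endpoints and give each the leftmost consecutive $k$-block avoiding previously placed MUs --- never gets stuck whenever any disjoint assignment of $T$ exists. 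The crux is an exchange lemma stating that the leftmost consecutive placement is \emph{tightest}, so a stuck step would already violate the coverage/Hall condition, a contradiction.

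Finally, I would address the circularity, which is what makes the offset and left-endpoint bookkeeping delicate: the circle has no canonical left end, so I would either cut it at an MU the solution leaves free, reducing to the linear case, or fix a consistent clockwise push direction and define $\Phi$ modulo the wraparound, checking that the shifting and swapping steps respect arcs crossing the cut. Termination then follows because $\Phi$ is a nonnegative integer that strictly decreases at every non-consecutive step.
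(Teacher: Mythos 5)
Your overall plan (take an arbitrary optimal solution and transform it into a cyclic-consecutive one) is the paper's plan, but both load-bearing steps of your write-up are missing or wrong, and they are exactly the two places where the paper's proof does its work. The first gap is the occupied-hole case, which you correctly identify as the main obstacle and then leave unresolved, hedging between a swap whose validity you doubt and an unproven ``crux'' exchange lemma for a greedy/Hall reformulation. In fact the swap \emph{never} fails, and the single fact needed to see this---the fact the paper's proof turns on---is that all arcs are cyclic intervals of the \emph{same} length $n$. If packet $j$ occupies a hole $h$ of packet $i$, let $d_i$ and $d_j$ be the offsets of $h$ inside $\mathcal{S}_i$ and $\mathcal{S}_j$. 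If $d_j \ge d_i$, then $\mathcal{S}_j$ contains every MU of $\mathcal{S}_i$ preceding $h$; if $d_j \le d_i$, then equal arc lengths force $\mathcal{S}_j$ to contain every MU of $\mathcal{S}_i$ following $h$. So one entire side of the hole always lies inside $\mathcal{S}_j$, and the extremal assigned MU of $A_i$ on that side is a legal swap partner for $h$. Your worrying instance---$h$ at the left endpoint of $\mathcal{S}_j$---is precisely the case $d_j = 0 \le d_i$, where the whole right portion of $\mathcal{S}_i$ is covered and the swap goes through on that side; the push direction is dictated by the geometry, not something that needs to be ``carefully chosen.'' Without this observation your case analysis cannot be closed; your fallback route (integral flow, condition \eqref{hall_cond}, leftmost-consecutive greedy) is really the skeleton of the correctness proof of Algorithm \ref{Cyclic_algorithm}, not of this lemma, and its crux---that the greedy never gets stuck---is exactly what you leave unproven.

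The second gap is that your termination claim is false as stated: $\Phi$ does \emph{not} strictly decrease at every non-consecutive step, because it is invariant under occupied-hole swaps. A swap preserves the multiset of occupied MUs and only exchanges ownership, so the net change in $\Phi$ is $[\mathrm{off}_i(h)-\mathrm{off}_i(\ell)]+[\mathrm{off}_j(\ell)-\mathrm{off}_j(h)]$, where $\ell$ is the MU handed from $i$ to $j$. Since $h$ and $\ell$ both lie in $\mathcal{S}_i\cap\mathcal{S}_j$, both bracketed quantities equal the same cyclic displacement from $\ell$ to $h$ with opposite signs (exactly so whenever $2n\le N$; in general they cancel modulo $N$), so the change is zero: only moves into \emph{unoccupied} MUs lower $\Phi$. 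Indeed, one can build examples where a correct swap leaves even the total of the packets' spans, or the total number of holes, unchanged---so termination genuinely requires either a finer (e.g., lexicographic or per-packet) potential or a fixed processing order, neither of which you supply. To be fair, the paper's own proof is terse about how repeated exchanges terminate, but it rests on the always-valid one-sided exchange above; your version rests on a potential that provably does not decrease, which is a gap of a different kind.
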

\begin{proof}
We show that any optimal solution for the cyclic placement can be transformed into an optimal solution with a cyclic consecutive assignment of MUs to each packet. Assume an optimal solution with a gap in packet $j$'s assignment (i.e., the $k$ assigned MUs to packet $j$ are not cyclic consecutive). If the MUs in the gap are not assigned to any other packet, then clearly we can exchange MUs between the gap and the assigned MUs to obtain an assignment with no gap. Let us now consider a case where the MUs in the gap were assigned to other packets. Because of the cyclic placement and the fixed $n$, the packets assigned the gap MUs overlap with packet $j$ on either the MUs before the gap or the MUs after the gap. In either case we can exchange between MUs in the gap and MUs assigned to packet $j$ to obtain a cyclic consecutive assignment, as the MUs in the overlap can serve any of the overlapping packets. \qed
\end{proof}

Based on Lemma \ref{lemma_k_cons}, we propose an efficient algorithm for solving a cyclic instance. For convenience, we assume a circle-arc representation (see Section \ref{subsec:cyclic_placement}). Define an order of the packets with respect to packet $j_0$, such that the packets are sorted according to their arcs' starting points relatively to packet $j_0$'s starting point in clockwise order.
\begin{example}
Consider the cyclic instance in Fig. \ref{fig:cyclic_nkmtp_instance}, where the order is with respect to the topmost packet arc ($\{11,0,1,2\}$). The ordered packets are $\left\{ {11,0,1,2} \right\}$, $\left\{ {1,2,3,4} \right\}$, $\left\{ {3,4,5,6} \right\}$, $\left\{ {5,6,7,8} \right\}$, $\left\{ {7,8,9,10} \right\}$ and $\left\{ {9,10,11,0} \right\}$.
\end{example}
In the algorithm we begin with two empty sets $\Lambda$ and $\Omega$, which will eventually contain the read packets and their assigned MUs, respectively. We also initialize the sets ${\Lambda}_j$ and ${\Omega}_j$ (for $j=1,2,...,L$) as empty sets. The following algorithm solves optimally a cyclic nkMTP instance.

\begin{algorithm}
\label{Cyclic_algorithm}(Cyclic placement, optimal read algorithm)

For $j=1,2,...,L$, do:
\begin{enumerate}
\item Consider the set of packets $\left\{ {{\tilde{\mathcal S}_i}} \right\}_{i = 1}^L$  sorted with respect to packet $j$. Set $i:=1$.
\item \label{step_ij} If $|{\tilde{\mathcal{S}}}_{i}|\geq k$, add $i$ to $\Lambda_j$, and add the first $k$ MUs in ${\tilde{\mathcal{S}}}_{i}$ to $\Omega_j$. Remove the added MUs from all other packets.
\item Set $i:=i+1$. If $i \le L$, go to Step \ref{step_ij}. Otherwise, go to Step \ref{compare}.
\item If $\left| {{\Lambda _j}} \right| > \left| \Lambda  \right|$, set $\Lambda  := \Lambda_j$, $\Omega := \Omega_j$. \label{compare}
\end{enumerate}
\end{algorithm}

\begin{theorem}
The set of packets $\Lambda$ and their corresponding MUs in $\Omega$ found by Algorithm \ref{Cyclic_algorithm} are an optimal solution to a cyclic nkMTP instance.
\end{theorem}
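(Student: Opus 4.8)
The plan is to pair Lemma~\ref{lemma_k_cons} with the classical ``cut the circle, then run a linear greedy'' paradigm, and to argue that trying every packet as the reference point is enough to recover a global optimum. By Lemma~\ref{lemma_k_cons} it suffices to compete against optimal solutions in which each read packet is served by a block of $k$ cyclic-consecutive MUs, so I fix one such optimal solution $S$ with $\left|S\right| = L^{*}$ read packets and pairwise-disjoint length-$k$ blocks. The theorem then follows from two claims: (i) for each fixed outer index $j$, the inner loop (Steps~1--4) returns an \emph{optimal} solution of the \emph{linear} instance obtained by cutting the circle just before packet $j$'s arc and forbidding blocks to wrap past the cut; and (ii) at least one index $j$ yields a linear instance whose optimum already equals the circular optimum $L^{*}$. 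Since each $\Lambda_j$ is by construction a feasible set of disjoint $k$-assignments, we always have $\left|\Lambda_j\right| \le L^{*}$, so (i)--(ii) give $\max_j \left|\Lambda_j\right| = L^{*}$, which is exactly what Algorithm~\ref{Cyclic_algorithm} outputs.

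For claim (i), I would first relabel the MUs clockwise starting from packet $j$'s arc, turning the instance into a line on which every packet is an interval of length $n$ that must grab $k$ consecutive free positions inside that interval. In this linear view the inner loop is precisely the earliest-fit greedy: scan the packets in order of increasing left endpoint (the sort of Step~1), grant each the earliest $k$ free positions in its window and delete them, and skip any packet left with fewer than $k$ free positions. Because all windows have the same length $n$ and all demands equal $k$, I would prove optimality of this greedy by a standard exchange argument: compare the greedy output $G$ to any optimal linear solution $O$, scan packets left to right, and at the first packet where they differ modify $O$ --- either by sliding a served block to the earliest free slot, or by replacing $O$'s choice with the packet that $G$ serves --- without reducing $\left|O\right|$ and without introducing overlaps. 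The invariant driving this is that greedy's assigned positions are never to the right of $O$'s (``greedy stays ahead''), which is where the equal window length and equal demand are used.

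For claim (ii), I would locate a cut that splits none of $S$'s blocks. When $kL^{*} < N$ there is an MU covered by no block of $S$; cutting inside such a free MU leaves all $L^{*}$ blocks intact on the resulting line, so that line's optimum is at least $L^{*}$, hence equal to $L^{*}$. (The tiling case $kL^{*} = N$ is handled by cutting at a block boundary, which likewise splits no block.) It then remains to pass from this ``any position'' cut to a cut at a packet arc-start, since Algorithm~\ref{Cyclic_algorithm} only tries the $L$ anchors $s_1,\dots,s_L$: I would choose $S$ among optimal solutions so that its blocks are pushed as far clockwise as possible and then anchor at a read packet adjacent to the gap, arguing that for this $S$ the edge immediately preceding the chosen arc-start is interior to no block, so the corresponding linear instance still admits all of $S$.

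The step I expect to be the real obstacle is this last one in claim (ii): making precise that restricting the candidate cuts to packet arc-starts loses nothing. The difficulty is that an arc has length $n > k$, so an arc may reach back across an earlier block, and a cut placed before its start can straddle that block; the argument must therefore either re-choose the block assignment of $S$ (using the freedom granted by Lemma~\ref{lemma_k_cons}) or invoke the greedy's ``stays ahead'' property directly at the arc-start cut to show it still attains $L^{*}$. By contrast, the exchange argument of claim (i), though the technical core, is routine once the stays-ahead invariant is set up for equal-length windows and equal demands.
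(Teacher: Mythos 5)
Your skeleton (Lemma~\ref{lemma_k_cons}, then anchor-plus-greedy with an exchange argument) is the same as the paper's, and your claim (i) exchange is essentially the paper's induction. But there are two genuine gaps. First, claim (i) misdescribes Algorithm~\ref{Cyclic_algorithm}: the inner loop does \emph{not} forbid blocks from wrapping past the cut. Step~\ref{step_ij} serves a packet whenever its arc still contains $k$ available MUs, wrapped or not, so $\Lambda_j$ need not be a feasible solution of your linear instance $I_j$ at all, and the identity $|\Lambda_j|=\mathrm{opt}(I_j)$ that your argument rests on is not established. The inequality you actually need, $|\Lambda_j|\ge \mathrm{opt}(I_j)$, does hold, but it needs its own bridging argument relating the wrapping-allowed greedy to the linear one (e.g., the two greedies coincide until the circular one serves a wrapping packet that the linear one cannot; at that moment the circular greedy consumes every still-available MU in the un-wrapped region, after which the linear greedy can serve nothing further), and you never supply it. Relatedly, your premise that ``all windows have the same length $n$'' is false in $I_j$, since arcs crossing the cut are truncated.

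Second, and more seriously, claim (ii) --- the step you yourself flag as the obstacle --- is exactly where the paper's key idea lives, and your sketch does not close it. Cutting at a gap and then relocating the cut to a nearby arc-start fails for precisely the reason you state: an arc extends $n-k$ positions behind its block, so a block of another packet can straddle the cut placed at that arc-start; ``pushing blocks as far clockwise as possible'' is not a well-defined normalization and does not visibly prevent this. The paper's resolution is different and simpler: take the consecutive-block optimal solution of Lemma~\ref{lemma_k_cons} and rotate \emph{all} of its blocks counter-clockwise by the same amount (a rigid rotation preserves pairwise disjointness and keeps every block inside its arc) until some served packet $j_0$'s block coincides with the \emph{first} $k$ MUs of its own arc. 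Anchoring at $j_0$ then fixes both of your problems at once: the greedy's first action agrees with the optimal solution on $j_0$, and, since every other optimal block must avoid positions $\{0,\dots,k-1\}$ relative to the cut, no optimal block crosses the cut --- so the circular/linear distinction evaporates and the left-to-right exchange (replace the smallest served index $i'>i$ by $i$, reusing the greedy's assignment) goes through directly on the circle. Without this rotation step your proof is incomplete at its decisive point.
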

\begin{proof}
According to Lemma \ref{lemma_k_cons}, there exists an optimal solution where the $k$ MUs assigned to a read packet are cyclic consecutive. We further show that without loss of generality, there is at least one packet $j_0$ in the solution that is assigned its first $k$ MUs. If there is no such packet, we can shift the solution counter-clockwise until this condition is met. Given that $j_0$ is a packet assigned its first $k$ MUs, we prove that the algorithm finds the optimal solution in iteration $j=j_0$. To prove this, we show that if packet $i$ was added to $\Lambda_{j_0}$ in Step \ref{step_ij}, then this packet appears in the optimal solution. This is proved by induction on $i$. Assume all packets $1,\ldots,i-1$ (in the order by $j_0$) can be chosen as in Step \ref{step_ij}. Then we show that the $i$-th packet can be chosen in the same way. We assume by contradiction that $|{\tilde{\mathcal{S}}}_{i}|\geq k$ and there is no optimal solution that contains packet $i$. Then we look at the smallest packet index $i'>i$ for which packet $i'$ appears in the optimal solution. From the fact that its $k$ assigned MUs are cyclic consecutive it is possible to shift the assignment to the first MU index in ${\tilde{\mathcal{S}}}_{i}$, and replace $i'$ by $i$ in the optimal solution without affecting the selection of packet indices larger than $i'$. This is a contradiction.

The proof is completed by observing that maximizing the size of the packet set $\Lambda_j$ over all indices $j$ is guaranteed to give the optimal solution, because at least one packet $j$ is qualified as a $j_0$ that is in the optimal solution with its first $k$ MUs.
\qed
\end{proof}

Algorithm \ref{Cyclic_algorithm} requires simple sorting and comparison operations, resulting in $\mathcal{O}(L^2)$ complexity. Since the solution method assures that MUs assigned to each read packet are cyclic consecutive, the non-assigned MUs can be regarded as $n-k$ erased symbols, or as a cyclic \textit{burst} of $n-k$ erasures. An example is shown in Fig. \ref{fig:cyclic_erasures}. This erasure structure suggests the use of $[n,k]$ \textit{binary cyclic} codes (not necessarily MDS), which are especially efficient at recovering burst erasures. Cyclic codes are linear codes, with the property that a cyclic shift of a codeword produces a codeword as well. These codes are capable of recovering from any cyclic burst erasure of length up to $n-k$ \cite{Ryan}. The use of binary cyclic codes simplifies the coding process considerably. The reason is that non-trivial MDS codes require non-binary field arithmetic and impose certain restrictions on the code parameters, which can mostly be lifted once cyclic binary codes are used.

\begin{figure}[t!]
\centering
\includegraphics[scale=0.9]{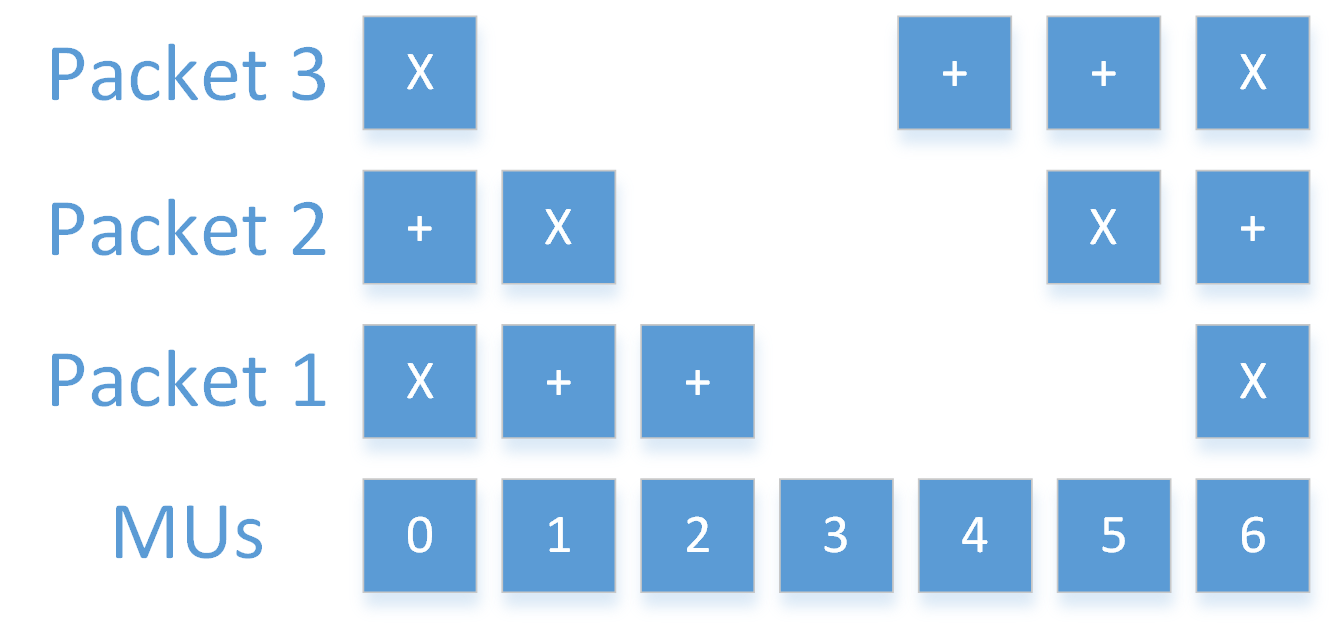}
\caption{A solution of a cyclic instance ($k=2, n=4$). '+' denotes an MU assigned to the packet where 'X' denotes an MU not assigned to the packet (erasure).}
\label{fig:cyclic_erasures} \vspace{-5pt}
\end{figure}

\begin{example}
Consider the (systematic) cyclic code ${\cal C} = \left\{ {0000,0101,1010,1111} \right\}$, where $k=2$ bits are encoded to $n=4$ bits. There are $n$ possible cyclic bursts of length $n-k=2$. Assume a burst in the first $2$ codeword positions. Then the remaining bits at the last $2$ positions are all distinct: $00, 01, 10$ and $11$, determining uniquely the codeword. The same holds for any cyclic burst erasure of length $n-k=2$.
\end{example}

\subsection{Design placement}

We now turn to provide an efficient optimal read algorithm for an nkMTP instance of the design placement. The algorithm we propose owes its efficiency to the sufficient pairwise condition satisfied by design-placement instances. As we show below, if the pairwise condition is satisfied, an optimal solution does not need to assign MUs contained in sets of {\em more than two} packets. This fact turns out to imply an extremely simple assignment algorithm. In the typical case $k>n/2$, a certain set (design block) of $n$ MUs can not serve more than one packet. Thus, we consider design instances with $L$ packets stored in $L$ \textit{distinct} MU sets (otherwise a subset of the packets stored in distinct blocks is considered). Denote by ${\mathop \mathcal{S}\limits^.}_{i}$ the MUs indexed in $\mathcal{S}_i$ that are not shared by any other MU set, and by ${\mathop \mathcal{S}\limits^.}_{ij}$ the MUs indexed in both $\mathcal{S}_{i}$ and $\mathcal{S}_j$ but not in any other MU set. In the rest of this sub-section, we show that for each $i$, at least $k$ MUs from the sets ${\mathop \mathcal{S}\limits^.}_{i}$ and ${\mathop \mathcal{S}\limits^.}_{ij}$ ($j \ne i$) can be assigned to packet $i$ (such that these MUs do not serve any other packet). At a high level, the assignment that guarantees $k$ MUs to packet $i$ is all of ${\mathop \mathcal{S}\limits^.}_{i}$, and half of each ${\mathop \mathcal{S}\limits^.}_{ij}$. We present this more formally in the following Algorithm~\ref{design_nkMTP_algorithm}. The algorithm is initialized with empty sets $\mathcal{S}'_i$ ($i=1,2,...,L$) that will eventually contain an optimal assignment of MUs to the packets. We use the notation $\left\lfloor x \right\rfloor$ (resp. $\left\lceil x \right\rceil$) for the floor (resp. ceiling) value of $x$, i.e., the largest integer not greater than $x$ (resp. the smallest integer not smaller than $x$).

\begin{algorithm}
\label{design_nkMTP_algorithm}(Design placement, optimal read algorithm)

For each packet $i=1,...,L$, do:
\begin{enumerate}
\item Add the MUs in ${\mathop \mathcal{S}\limits^.}_{i}$ to $\mathcal{S}'_i$.
\item For each $j$ such that $\left| {{{\mathop {\cal S}\limits^. }_{ij}}} \right|$ is even, add $\left| {{{\mathop {\cal S}\limits^. }_{ij}}} \right|/2$ MUs from ${{{\mathop {\cal S}\limits^. }_{ij}}}$ to $\mathcal{S}'_i$ (disjoint from the MUs added to $\mathcal{S}'_j$ in a different iteration).
\item For each $j$ such that $\left| {{{\mathop {\cal S}\limits^. }_{ij}}} \right|$ is odd, add either $\left\lceil {\left| {{{\mathop {\cal S}\limits^. }_{ij}}} \right|/2} \right\rceil$ or $\left\lfloor {\left| {{{\mathop {\cal S}\limits^. }_{ij}}} \right|/2} \right\rfloor$ MUs from ${{{\mathop {\cal S}\limits^. }_{ij}}}$ to $\mathcal{S}'_i$, according to the policy specified below under {\em floor/ceil balancing}. \label{floor_ceil}

\end{enumerate}
\end{algorithm}

\textbf{Floor/ceil balancing.} For Algorithm~\ref{design_nkMTP_algorithm} we need to specify whether to assign $\left\lceil {\left| {{{\mathop {\cal S}\limits^. }_{ij}}} \right|/2} \right\rceil$ or $\left\lfloor {\left| {{{\mathop {\cal S}\limits^. }_{ij}}} \right|/2} \right\rfloor$ in the odd case (step \ref{floor_ceil}). We show such an assignment that for each $i$ balances the number of floors and ceils sufficiently to guarantee at least $k$ assigned MUs. Construct an undirected graph $U$ whose vertices are the packet indices, and connect two vertices $i$ and $j$ by an edge if $\left| {{{\mathop \mathcal{S}\limits^. }_{ij}}} \right|$ is odd. Remove from the graph vertices not connected by an edge to any other vertex. An \textit{orientation} of $U$ is an assignment of a direction to each edge in $U$ (leading to a directed graph). There always exists an orientation of an undirected graph such that the number of edges entering and exiting every vertex differ by at most one \cite{Nash}. This orientation can be found in time linear in the number of edges \cite{Bondy}. We denote such an orientation by $\mathop U\limits^ \to$, and an example is shown in Fig. \ref{fig:balanced_orientation}. Given $\mathop U\limits^ \to$, we can rewrite step \ref{floor_ceil} in Algorithm~\ref{design_nkMTP_algorithm} in a precise way
\begin{enumerate}
\setcounter{enumi}{2}
\item For each $j$ such that $\left| {{{\mathop {\cal S}\limits^. }_{ij}}} \right|$ is odd
\begin{enumerate}
\item If the edge between $i$ and $j$ is oriented towards $i$ in $\mathop U\limits^ \to$, add  $\left\lceil {\left| {{{\mathop {\cal S}\limits^. }_{ij}}} \right|/2} \right\rceil$ MUs (not added earlier) from ${{{\mathop {\cal S}\limits^. }_{ij}}}$ to packet $i$.
\item Otherwise, add  $\left\lfloor {\left| {{{\mathop {\cal S}\limits^. }_{ij}}} \right|/2} \right\rfloor$ MUs (not added earlier) from ${{{\mathop {\cal S}\limits^. }_{ij}}}$ to packet $i$.
\end{enumerate}
\end{enumerate}

 \begin{figure}[t]
\centering
\includegraphics[scale=0.65]{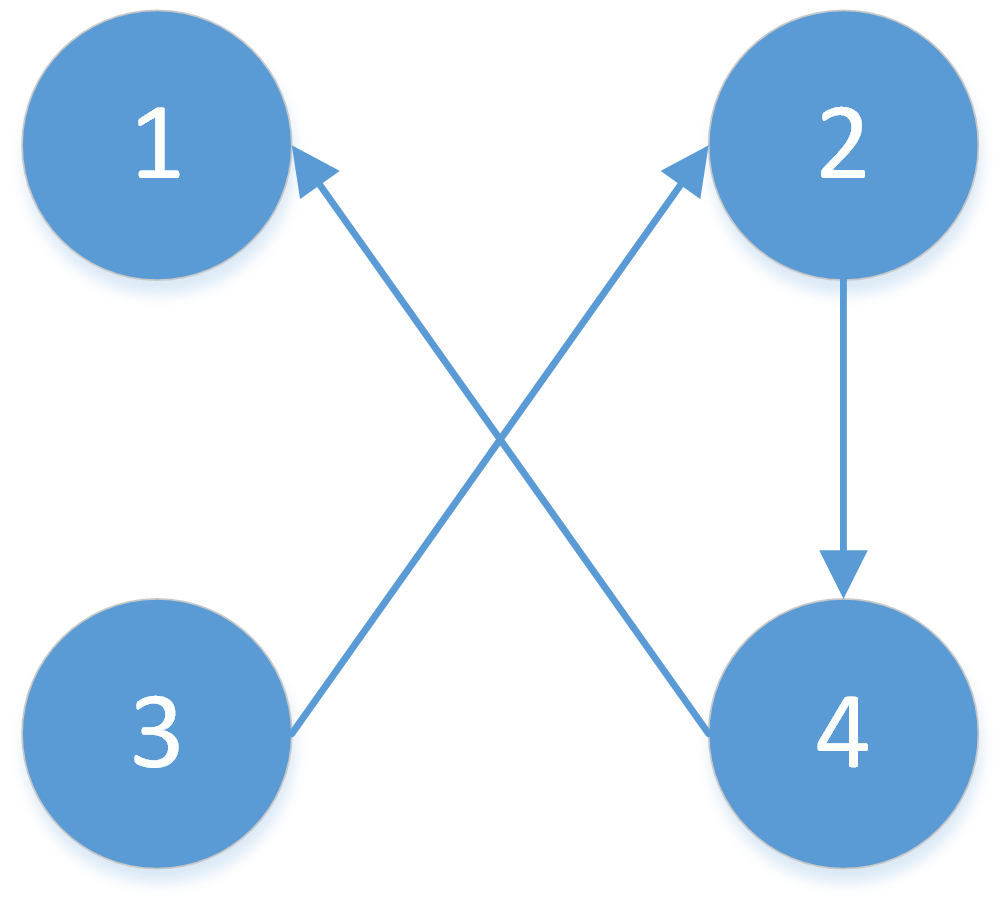}
\caption{An orientation where the number of edges entering and existing a vertex differ by at most once.}
\label{fig:balanced_orientation}
\end{figure}

\begin{theorem}
\label{thm:design_algorithm}
The sets $\mathcal{S}'_i$ in Algorithm \ref{design_nkMTP_algorithm} form an optimal solution to a design instance.
\end{theorem}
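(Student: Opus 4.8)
The plan is to prove the stronger statement that Algorithm \ref{design_nkMTP_algorithm} produces a \emph{full-throughput} solution: it assigns to every packet $i$ a set $\mathcal{S}'_i$ of at least $k$ MUs, and the sets $\mathcal{S}'_1,\dots,\mathcal{S}'_L$ are pairwise disjoint. Since no read algorithm can recover more than the $L$ requested packets, this gives $L^*=L$, which is the optimum. So the entire argument reduces to establishing disjointness and the per-packet lower bound $|\mathcal{S}'_i|\ge k$.

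Disjointness I would dispatch first, as it is immediate from the construction. Every MU placed into $\mathcal{S}'_i$ lies either in ${\mathop {\mathcal S}\limits^{.}}_{i}$, which by definition belongs to no other set, or in some ${\mathop {\mathcal S}\limits^{.}}_{ij}$, whose elements are shared only by packets $i$ and $j$. For each such pair the algorithm partitions ${\mathop {\mathcal S}\limits^{.}}_{ij}$ between the two packets (a ceil-sized part to the endpoint the edge of $\mathop U\limits^{\to}$ points to, the complementary floor-sized part to the other), so no MU is assigned twice. Crucially, the algorithm never touches MUs lying in three or more of the sets $\mathcal{S}_1,\dots,\mathcal{S}_L$, which is precisely the claim that an optimal solution need not use such MUs.

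The heart of the proof is the bound $|\mathcal{S}'_i|\ge k$, which I would obtain by a degree count combined with the pairwise condition. For $m\in\mathcal{S}_i$ let $d_m$ be the number of packets containing $m$; then
\begin{equation}
\sum_{j\ne i}\left|\mathcal{S}_i\cap\mathcal{S}_j\right|=\sum_{m\in\mathcal{S}_i}(d_m-1).
\end{equation}
Splitting $\mathcal{S}_i$ into ${\mathop {\mathcal S}\limits^{.}}_{i}$ ($d_m=1$), the pairwise parts ${\mathop {\mathcal S}\limits^{.}}_{ij}$ ($d_m=2$), and a remainder $R_i$ of MUs shared by three or more packets ($d_m\ge 3$), the right-hand side is at least $\sum_{j\ne i}|{\mathop {\mathcal S}\limits^{.}}_{ij}|+2|R_i|$. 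Feeding in the pairwise condition $\sum_{j\ne i}|\mathcal{S}_i\cap\mathcal{S}_j|\le (L-1)t_{\max}=2(n-k)$ and the identity $n=|{\mathop {\mathcal S}\limits^{.}}_{i}|+\sum_{j\ne i}|{\mathop {\mathcal S}\limits^{.}}_{ij}|+|R_i|$ yields $|{\mathop {\mathcal S}\limits^{.}}_{i}|-|R_i|\ge 2k-n$. Writing the assignment size as $|\mathcal{S}'_i|=|{\mathop {\mathcal S}\limits^{.}}_{i}|+\tfrac12\sum_{j\ne i}|{\mathop {\mathcal S}\limits^{.}}_{ij}|+\tfrac12(c_i-f_i)$, where $c_i$ and $f_i$ count the ceils and floors assigned to $i$, the idealized contribution obtained by splitting every shared block exactly in half equals $\tfrac12(|{\mathop {\mathcal S}\limits^{.}}_{i}|-|R_i|)+\tfrac{n}{2}\ge k$.

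The main obstacle is the parity bookkeeping hidden in the $\tfrac12(c_i-f_i)$ term, and this is exactly what the floor/ceil balancing controls. The inequality above is only real-valued: when packet $i$ has an odd number of odd-cardinality pairwise intersections, $\tfrac12(|{\mathop {\mathcal S}\limits^{.}}_{i}|-|R_i|)+\tfrac{n}{2}$ is a half-integer, so a single uncompensated extra floor could in principle push $|\mathcal{S}'_i|$ below $k$. I would close this gap by invoking the balanced orientation $\mathop U\limits^{\to}$ of \cite{Nash}, which guarantees $|c_i-f_i|\le 1$ for every vertex. A short case split on the parity of the number of odd intersections then shows that the half-integer slack, together with the integrality of $|\mathcal{S}'_i|$, always rounds in the safe direction, so $|\mathcal{S}'_i|\ge k$ holds in every case. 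Combined with disjointness this shows all $L$ packets are read, and optimality follows.
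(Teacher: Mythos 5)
Your proposal is correct, and its overall skeleton coincides with the paper's: disjointness follows from the construction, each packet is shown to receive at least $k-\tfrac{1}{2}$ MUs (in real-valued terms), the balanced orientation of \cite{Nash} caps the floor/ceil imbalance at one, and integrality of $|\mathcal{S}'_i|$ rounds the bound up to $k$, whence all $L$ packets are read and optimality is immediate. Where you genuinely depart from the paper is in the technical heart: the derivation of the bound $|\dot{\mathcal{S}}_i| + \tfrac{1}{2}\sum_{j \ne i} |\dot{\mathcal{S}}_{ij}| \ge k$. The paper gets there by expanding $|\dot{\mathcal{S}}_i|$ and $|\dot{\mathcal{S}}_{ij}|$ via inclusion-exclusion over all intersection sets $\mathcal{S}_{\mathcal{J}}$, regrouping into an alternating-sign sum, and then invoking binomial identities to show that the net contribution of every MU shared by $T \ge 4$ packets is $(T-3)/2 \ge 0$. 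You replace all of that machinery with an elementary double count: writing $\sum_{j\neq i}\left|\mathcal{S}_i\cap\mathcal{S}_j\right|=\sum_{m\in\mathcal{S}_i}(d_m-1)$, splitting $\mathcal{S}_i$ by degree into $\dot{\mathcal{S}}_i$, the sets $\dot{\mathcal{S}}_{ij}$, and the high-degree remainder $R_i$, and feeding in the pairwise budget $(L-1)t_{\max}=2(n-k)$ yields $|\dot{\mathcal{S}}_i|-|R_i|\ge 2k-n$ in two lines, from which the bound follows at once. Your route is shorter and makes the combinatorial mechanism transparent: an MU of degree $d_m \ge 3$ consumes at least two units of the intersection budget while costing packet $i$ only one potential MU, which is precisely the trade-off the paper verifies algebraically through its alternating sum. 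The paper's heavier computation does produce an explicit intermediate identity expressing the assignment size in terms of all intersection cardinalities, but nothing in the theorem requires it; for this statement your argument is a clean, fully rigorous simplification. One small presentational note: your closing "case split on the parity of the number of odd intersections" is unnecessary as stated, since the single inequality $c_i - f_i \ge -1$ together with integrality of $|\mathcal{S}'_i|$ and of $k$ already forces $|\mathcal{S}'_i| \ge k$, exactly as in the paper's final step.
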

\begin{proof}
First, an MU added to $\mathcal{S}'_i$ is not added to any $\mathcal{S}'_{j \ne i}$. The reason is that MUs in ${\mathop \mathcal{S}\limits^.}_{i}$ are added to  $\mathcal{S}'_i$ only, and two \textit{disjoint} subsets of MUs (using complementary ceiling/floor operations) are taken from ${{{\mathop {\cal S}\limits^. }_{ij}}}$ to $\mathcal{S}'_i$ and $\mathcal{S}'_j$ only. Define the function ${f_{ij}}\left( x \right)$ as ${\left\lceil x \right\rceil }$ if the edge between $i$ and $j$ is oriented towards $i$ in $\mathop U\limits^ \to$, and ${\left\lfloor x \right\rfloor }$ otherwise. If $i$ and $j$ are not connected in $\mathop U\limits^ \to$ (i.e., $\left| {{{\mathop \mathcal{S}\limits^. }_{ij}}} \right|$ is even), ${f_{ij}}\left( x \right)$ is simply $x$. The cardinality of $\mathcal{S}'_i$ is then
\begin{equation}
\label{num_algo}
\left| {{{\mathop {\cal S}\limits^. }_i}} \right| + \sum\limits_{j \ne i} {{f_{ij}}\left( {\left| {{{\mathop {\cal S}\limits^. }_{ij}}} \right|}/2 \right)}.
\end{equation}
In the rest of this proof, we show that \eqref{num_algo} is lower-bounded by $k$. For odd-cardinality sets ${{{\mathop \mathcal{S}\limits^. }_{ij}}}$, $\left\lfloor {\left| {{{\mathop \mathcal{S}\limits^. }_{ij}}} \right|/2} \right\rfloor$ equals $\left| {{{\mathop \mathcal{S}\limits^. }_{ij}}} \right|/2 - 1/2$, and $\left\lceil {\left| {{{\mathop \mathcal{S}\limits^. }_{ij}}} \right|/2} \right\rceil$ equals ${\left| {{{\mathop {{\rm{ }}\mathcal{S}}\limits^. }_{ij}}} \right|/2} + 1/2$. Since the number of edges entering and exiting a vertex differ by at most one, the number of floor operations in \eqref{num_algo} might exceed the number of ceiling operations by at most one. Therefore, \eqref{num_algo} is lower-bounded by
\begin{equation}
\label{lb1}
\left| {{{\mathop {\cal S}\limits^. }_i}} \right| + {\frac{1}{2}\sum\limits_{j \ne i} {\left| {{{\mathop {\cal S}\limits^. }_{ij}}} \right|} } - \frac{1}{2}.
\end{equation}
According to the inclusion-exclusion principle,
\begin{equation}
\label{ix_Si}
\left| {{{\dot {\mathcal{S}}}_i}} \right| = \sum\limits_{\mathcal{J} \supseteq \left\{ i \right\}} {{{\left( { - 1} \right)}^{\left| \mathcal{J} \right| - 1}}\left| {{\mathcal{S}_\mathcal{J}}} \right|},
\end{equation}
\begin{equation}
\label{ix_Sij}
\left| {{{\dot {\mathcal{S}}}_{ij}}} \right| = \sum\limits_{\mathcal{J} \supseteq \left\{ {i,j} \right\}} {{{\left( { - 1} \right)}^{\left| \mathcal{J} \right|}}\left| {{\mathcal{S}_\mathcal{J}}} \right|}.
\end{equation}
Substitute $\left| {{{\dot {\mathcal{S}}}_i}} \right|$ and $\left| {{{\dot {\mathcal{S}}}_{ij}}} \right|$ in \eqref{lb1} by the sums expanding them in \eqref{ix_Si}-\eqref{ix_Sij}. Each set $\mathcal{J} \supseteq \left\{ i \right\}$ appears in the combined sums once due to $\left| {{{\dot {\mathcal{S}}}_i}} \right|$, and additional  $|\mathcal{J}|-1$ times (weighted by $1/2$ and with an opposite sign) due to the summation of $\left| {{{\dot {\mathcal{S}}}_{ij}}} \right|$ over $j \ne i$. Therefore, \eqref{lb1} equals
\begin{align}
&\frac{1}{2}\sum\limits_{\mathcal{J} \supseteq \left\{ i \right\}} {{{\left( { - 1} \right)}^{\left| \mathcal{J} \right| - 1}}\left( {3 - \left| \mathcal{J} \right|} \right)\left| {{\mathcal{S}_\mathcal{J}}} \right|} - \frac{1}{2}
\nonumber\\ \label{eq1}
&=n - \frac{1}{2}\sum\limits_{j \ne i} {\left| {{\mathcal{S}_{ij}}} \right|}  + \frac{1}{2}\sum\limits_{\scriptstyle{\cal J} \supseteq \left\{ i \right\},\hfill\atop
\scriptstyle|{\cal J}| \ge 4\hfill} {{{\left( { - 1} \right)}^{\left| {\cal J} \right|}}\left( {\left| {\cal J} \right| - 3} \right)\left| {{{\cal S}_{\cal J}}} \right|}  - \frac{1}{2}.
\end{align}

We claim that the last sum in \eqref{eq1} is non-negative. This sum counts the number of occurrences of MUs in intersection sets of $4$ packets or more that include packet $i$, multiplied by the factor $(|\mathcal{J}|-3)/2$, and with alternating signs. Consider a certain MU shared by exactly $T \ge 4$ packets including packet $i$. This MU appears in ${T-1 \choose |\mathcal{J}|-1}$ intersection sets of cardinality $4 \le |\mathcal{J}| \le T$ (we subtract $1$ as the packet index $i$ is always contained in $\mathcal{J}$). Therefore, the contribution of this MU to the count is
\begin{align}
&\frac{1}{2}\sum\limits_{|\mathcal{J}| = 4}^T {{{\left( { - 1} \right)}^{|\mathcal{J}| }}}(|\mathcal{J}|-3){T-1 \choose |\mathcal{J}|-1}
\\\nonumber
&=\frac{1}{2}\sum\limits_{|\mathcal{J}| = 3}^{T-1} {{{\left( { - 1} \right)}^{|\mathcal{J}|+1 }}}(|\mathcal{J}|-2){T-1 \choose |\mathcal{J}|}
\\\nonumber
&= \frac{1}{2}\sum\limits_{|\mathcal{J}| = 0}^{2} {{{\left( { - 1} \right)}^{|\mathcal{J}| }}}(|\mathcal{J}|-2){T-1 \choose |\mathcal{J}|}= (T-3)/2 \ge 0,
\end{align}
where we used the binomial identities
\begin{equation}
\sum\limits_{j = 0}^T {{{\left( { - 1} \right)}^{j }}}{T \choose j}=\sum\limits_{j = 0}^T j{{{\left( { - 1} \right)}^{j }}}{T \choose j}=0.
\end{equation}
This establishes the non-negativity of the last sum in \eqref{eq1}. ${\left| {{\mathcal{S}_{ij}}} \right|} \le t_{\rm max}$, so we conclude that \eqref{eq1} and thus \eqref{num_algo} are lower-bounded by
\begin{align}
\label{eq2}
n - \frac{1}{2}\sum\limits_{j \ne i} {{t_{\max }} -\frac{1}{2} = n - \frac{1}{2}\left( {L - 1} \right){t_{\max }} -\frac{1}{2} = k-\frac{1}{2}}.
\end{align}
The number of MUs added to packet $i$ in \eqref{num_algo} is necessarily integer. Thus, we have the integer value \eqref{num_algo} lower-bounded by the non-integer value \eqref{eq2}. This means that \eqref{num_algo} is in fact lower-bounded by the ceiling value of \eqref{eq2}, i.e., by $k$. \qed
\end{proof}

Algorithm \ref{design_nkMTP_algorithm} requires the construction of the sets ${\mathop \mathcal{S}\limits^.}_{i}$ and ${\mathop \mathcal{S}\limits^.}_{ij}$, which can be performed in $\mathcal{O}(Ln)$ operations by running over the elements in the sets $\mathcal{S}_i$. We then have to find a balanced path in a graph that is complete in the worst case (i.e., when $|{\mathop \mathcal{S}\limits^.}_{ij}|$ are all odd), with $\mathcal{O}(L^2)$ edges. The path is found in linear-time in the number of edges, such that the total complexity of Algorithm \ref{design_nkMTP_algorithm} is $\mathcal{O}(L(n+L))$.

\begin{example}
Consider the block design of Example \ref{BIBD_ex} (where $n=3$). This design can be used to read $L=3$ packets if $t_{\rm max} = n-k =1$, i.e., when $k=2$. Assume that the three packets are stored in the MU sets ${{\cal S}_1} = \left\{ {1,2,3} \right\}$, ${{\cal S}_2} = \left\{ {1,4,5} \right\}$ and ${{\cal S}_3} = \left\{ {3,5,6} \right\}$. Then ${\mathop \mathcal{S}\limits^.}_{1} = \left\{ 2 \right\},{\mathop \mathcal{S}\limits^.}_{2} = \left\{ 4 \right\},{\mathop \mathcal{S}\limits^.}_{3} = \left\{ 6 \right\},{\mathop \mathcal{S}\limits^.}_{12} = \left\{ 1 \right\},{\mathop \mathcal{S}\limits^.}_{13} = \left\{ 3 \right\}$ and ${\mathop \mathcal{S}\limits^.}_{23} = \left\{ 5 \right\}$. Since all the pairwise sets are of odd cardinality, $U$ in this case is a complete graph with three vertices. Labeling these vertices $1,2$ and $3$, a valid orientation $\mathop U\limits^ \to$ is $1 \to 2 \to 3 \to 1$. Using Algorithm \ref{design_nkMTP_algorithm}, we obtain ${{\cal S}'_1} = \left\{ {2,3} \right\},{{\cal S}'_2} = \left\{ {1,4} \right\}$ and ${{\cal S}'_3} = \left\{ {5,6} \right\}$.
\end{example}

\section{Probabilistic analysis}
\label{sec:prob_analysis}

In this section, we consider \textit{ensembles} of random instances characterized by $k,n,N,L$ and the placement policy in use, where an instance is obtained by a random placement of the $L$ packets. Our primary objective is to calculate or bound the full-throughput probability $\Pr \left( {{L^*} = L} \right)$ for the three placement policies discussed above. For the uniform and cyclic placements we use the coverage and pairwise conditions (see Section \ref{subsec:ft_conditions}) to obtain upper and lower bounds, respectively, on the full-throughput probability. We later present a convenient tighter probabilistic framework for analyzing the throughput performance for the design placement.

\subsection{Uniform placement}
\label{subsec:uniform_prob}

Denote the probability of the coverage condition \eqref{coverage_condition} in the uniform placement by $p_{\rm cover}^{\rm uni}$. The full-throughput probability $\Pr \left( {{L^*} = L} \right)$ is clearly upper bounded by $p_{\rm cover}^{\rm uni}$. The coverage condition in the uniform case is equivalent to the requirement that the union of $L$ random $n$-subsets of an $N$-element set results in a set of cardinality at least $kL$. A closed-form expression for this probability is provided by the union model \cite{QPEC}, which is an extension of the balls-and-bins model \cite{Mitz}. The details are provided in Appendix \ref{appendix:p_cover}. Through this calculation we obtain an upper bound on the full-throughput probability for any combination of $k,n,N$ and $L$. Exact calculation of the full-throughput probability for the uniform placement seems hard, and even a lower bound through the pairwise condition~\eqref{suff_cond} is not available. This lack of positive results for the uniform placement is not very surprising given the computational hardness of solving it optimally.

\subsection{Cyclic placement}

Considering the circle-arc representation of cyclic instances (see Section \ref{subsec:cyclic_placement}), the probability of the coverage condition is the probability that at least $kL$ points of the circle are covered by $L$ random arcs of $n$ cyclic consecutive points each. In \cite{Holst, Barlevy}, the probability distribution of the number of vacant points on a circle once $L$ random arcs are placed without replacement was derived. In our case, replacement is allowed (i.e., the same MU set may serve two packets or more), and for the upper bound on full-throughput probability we are actually interested in the complement distribution of the occupied points. The details are provided in Appendix \ref{appendix:p_cover}. We denote the probability of the coverage condition in the cyclic case by $p_{\rm cover}^{\rm cyc}$. For the lower bound, unlike the uniform policy, the structure in the cyclic case allows to find the probability of the pairwise condition, which we denote $p_{\rm pair}^{\rm cyc}$.
\begin{theorem}
\label{th:cyclic_pair_prob}
Consider an instance drawn at random from a cyclic ensemble with parameters $N,n,L$. The probability that the maximum pairwise intersection cardinality is at most $t_{\rm max}$ is
\begin{equation}
\label{p_cyc_pair}
p_{{\rm{pair}}}^{{\rm{cyc}}} = {N^{1 - L}}\prod\limits_{i = 1}^{L - 1} {\left( {N - L\left( {n - {t_{{\rm{max}}}}} \right) + i} \right)}.
\end{equation}
\end{theorem}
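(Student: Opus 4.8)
The plan is to reduce the pairwise-intersection event to a purely geometric condition on the arcs' starting points, and then to count the favorable configurations by a circular-gap argument. Model each packet $i$ as an independent, uniformly random starting point $a_i \in \{0,\dots,N-1\}$, so that packet $i$ occupies the arc of $n$ cyclic-consecutive MUs beginning at $a_i$; the sample space then has $N^L$ equally likely outcomes. The first and most delicate step is a geometric lemma: for two arcs of length $n$ whose starting points are at cyclic distance $d$, the intersection cardinality equals $\max(0,\,n-d) + \max(0,\,n-(N-d))$. In the natural regime $2n \le N$ (arcs no longer than half the circle, so the two arcs cannot overlap on both ends at once), this shows that the intersection exceeds $t_{\rm max}$ precisely when the cyclic distance between the starting points is strictly less than $n-t_{\rm max}$. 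Writing $g \triangleq n-t_{\rm max}$, I would conclude that \emph{all} pairwise intersections are at most $t_{\rm max}$ if and only if every pair among the $L$ starting points is at cyclic distance at least $g$. This characterization automatically absorbs the ``with replacement'' feature of the ensemble, since two coincident arcs ($d=0$) violate the condition, consistent with $d<g$.

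It then remains to count the ordered $L$-tuples $(a_1,\dots,a_L)$ whose points are pairwise at cyclic distance at least $g$. I would do this by a bijection with triples consisting of: (i) the value of $a_1$, giving $N$ choices; (ii) the clockwise gap sequence $(d_0,\dots,d_{L-1})$ encountered when traversing the circle from $a_1$, which is a composition of $N$ into $L$ parts each at least $g$; and (iii) an assignment of the remaining labels $2,\dots,L$ to the other $L-1$ points, giving $(L-1)!$ choices. Setting $d_t = g + e_t$ with $e_t \ge 0$ and $\sum_t e_t = N-Lg$ and applying stars and bars shows the number of admissible compositions is $\binom{N-Lg+L-1}{L-1}$. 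Hence the number of favorable tuples is $N\,(L-1)!\,\binom{N-Lg+L-1}{L-1}$.

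Dividing by $N^L$ and using the identity $(L-1)!\binom{N-Lg+L-1}{L-1} = \prod_{i=1}^{L-1}(N-Lg+i)$ yields exactly $p_{\rm pair}^{\rm cyc} = N^{1-L}\prod_{i=1}^{L-1}\bigl(N-L(n-t_{\rm max})+i\bigr)$, as claimed. As a sanity check I would verify the case $L=2$ directly: the formula collapses to $(N-2(n-t_{\rm max})+1)/N$, which is precisely the fraction of the $N$ possible separations $d$ for which two random arcs overlap in at most $t_{\rm max}$ points (the forbidden separations being the $g$ values $0,\dots,g-1$ on one side and the $g-1$ values $N-g+1,\dots,N-1$ on the other).

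I expect the main obstacle to be the geometric lemma of the first paragraph rather than the counting. One must pin down the intersection of two cyclic arcs as a function of their starting-point separation, correctly handle the wrap-around so that overlaps at ``both ends'' are accounted for, and identify the parameter regime in which the clean ``distance at least $g$'' characterization holds (and outside of which the stated probability should be read as degenerating to $0$, e.g.\ when $Lg>N$). By contrast, the tuple count is the classical generalized-Kaplansky circular selection and is essentially mechanical once the geometric reduction is established.
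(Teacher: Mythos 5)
Your proposal is correct and follows essentially the same route as the paper's own proof: both reduce the event to the condition that no packet's starting point lies within the first $g = n - t_{\rm max}$ MUs of any other packet, and both count favorable configurations as (choice of first starting point) $\times$ (composition of $N$ into $L$ gaps of size at least $g$, via stars and bars) $\times$ ($(L-1)!$ labelings), normalized by $N^L$. Your treatment is in fact slightly more careful than the paper's, since you make the arc-intersection lemma explicit and flag the regime $2n \le N$ (and the degenerate case $Lg > N$) under which the clean distance characterization, and hence the formula, is valid.
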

\begin{proof}
Consider a circle-arc representation of the cyclic nkMTP instances. Assume clockwise order, and that each packet arc does not precede the first packet arc. Each placed packet prevents the placement of the start of any other packet in its first $n-t_{\rm max}$ MUs. In a legal placement (i.e., when the pairwise intersection cardinality is at most $t_{\rm max}$), there are $N-L(n-t_{\rm max})$ MUs that do not belong to the first $n-t_{\rm max}$ MUs of any packet. Thus, the number of legal placements (given the order constraint above) is equivalently the number of ways to partition $N-L(n-t_{\rm max})$ MUs to $L$ sets of cyclic consecutive MUs. Thinking of the latter MU sets as \textit{gaps}, they can be distributed in ${N-L(n-t_{\rm max}) + L -1 \choose L-1}$ ways, which is the number of $L$ non-negative integers (gap lengths) whose sum is $N-L(n-t_{\rm max})$ \cite{Lint}. Each legal placement is obtained (uniquely) as a combination of 1) the starting MU for the first drawn packet, 2) a gap configuration, and 3) a permutation of the other $L-1$ packets. Hence to get the total number of legal placements we multiply the number of gap configurations by $N$ (the number of possible starting points for the first packet) and by $(L-1)!$ (the number of permutations of $L-1$ packets). After normalizing by the total number of (legal and illegal) placements $N^L$, we obtain \eqref{p_cyc_pair}. \qed
\end{proof}

\subsection{Design placement}
\label{subsec:design_prob}
The design placement enjoys a sharper characterization of full-throughput instances, which simplifies the probabilistic analysis. It is sufficient that the $L$ MU sets are different design blocks, and the request is full-throughput by the design properties and the sufficient pairwise condition \eqref{suff_cond}. Thus, the probability that a random design instance contains a full-throughput solution is lower-bounded by the probability that the $L$ MU sets are distinct. We denote this probability by $p^{\rm des}_{\rm pair}$. To find this probability, we use the balls-and-bins model \cite{Mitz}. In this model, there are $L$ balls and $b$ bins (recall that $b$ is the number of blocks in the design), where the balls are placed independently and uniformly at random in the bins. The probability of $L$ distinct blocks is the probability of $L$ non-empty bins \cite{Mitz}, which equals
\begin{align}
\label{balls_and_bins}
p^{\rm des}_{\rm pair} = {b \choose L} \frac{1}{{{b^L}}}\sum\limits_{j = 0}^{L} {{{\left( { - 1} \right)}^j}{L \choose j}{{\left( {L - j} \right)}^L}}.
\end{align}
In the typical case $k>n/2$, each block can serve only one packet, and thus $p^{\rm des}_{\rm pair}$ is the {\em exact} probability of a full-throughput solution in the design case.

To demonstrate the possible improved performance when the design placement is used, assume that $n=k+1$ for a fixed $k$ value. If the desired number of read packets is $L^*=L=3$, we can take the $2$-$(k^2+k+1,k+1)$ Steiner system \cite{Lint}, where the sufficient pairwise condition $t_{\rm max}=n-k=1$ is guaranteed by the $t=2$ parameter of the design. To have a full-throughput solution we need that the $L=3$ blocks drawn from the $b=k^2+k+1$ blocks of the design will be all distinct. In Fig. \ref{fig:cyclic_design}, we plot $p^{\rm des}_{\rm pair}$, which is $\Pr(L^*=L)$ in the design case, in comparison to the uniform upper bound and the cyclic lower and upper bounds on $\Pr(L^*=L)$ ($p_{\rm cover}^{\rm uni}$, $p_{\rm pair}^{\rm cyc}$ and $p_{\rm cover}^{\rm cyc}$, respectively). The exact probability for an uncoded cyclic placement is found using the coverage condition \eqref{coverage_condition}, and in this case it coincides with $p_{\rm cover}^{\rm cyc}$. We also plot $p_{\rm sim}^{\rm cyc}$, the only graph in Fig. \ref{fig:cyclic_design} obtained using simulations, which is the empirical $\Pr(L^*=L)$ in the cyclic case. The results clearly demonstrate that the design policy exhibits significantly superior performance. It is shown that with a fixed redundancy of $1$ chunk per packet, the full-throughput probability of the design placement grows monotonically when $k$ grows and $N=k^2+k+1$ MUs are deployed in the switch.

\begin{figure}[t]
\centering
\includegraphics[scale=0.6]{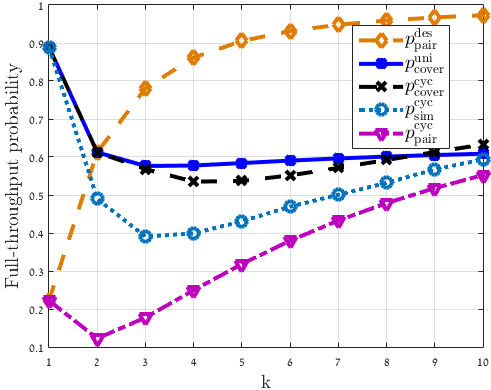}
\caption{A comparison of full-throughput performance bounds ($n=k+1, N=k^2+k+1$).}
\label{fig:cyclic_design}
\end{figure}

\section{Simulation Results}
\label{sec:simulations}

In this section, we provide simulation results of the average throughput performance of the placement policies proposed in Section \ref{sec:placement_policies}. Recall from Section~\ref{sec:read_problem} that the average throughput equals a constant times the average $L^*$ of ensemble instances. Hence evaluating the average throughput can be done by solving random instances optimally, and averaging the resulting $L^*$ values empirically. For the uniform placement we solved uniform nkMTP instances by an exhaustive-search algorithm (recall that no efficient algorithm is likely to exist in this case, see Section \ref{sec:problem_formulation}). We compare it to a greedy (suboptimal) solution, where random packets are assigned $k$ MUs, until no $k$ MUs that can serve a packet remain. To solve cyclic instances optimally, we used Algorithm \ref{Cyclic_algorithm}. A comparison of the average throughput performance $\bar \rho$ (see Section \ref{sec:read_problem}) of the uniform and cyclic placement policies is provided in Fig. \ref{tp_results} for $k=3$ and $n=3$ (uncoded case) up to $n=6$. Several observations follow from these results. First, coding improves throughput performance considerably. Taking the cyclic case as an example when $L=4$, the throughput performance is improved by $18\%$ ($n=4$) to $52\%$ ($n=6$) compared to the uncoded case ($n=3$). Another important observation is that the uniform policy does \textit{not} necessarily lead to better performance compared to the cyclic policy. Actually, the relation between the performance of these schemes depends on the system parameters. We can see that the cyclic case provides higher throughput performance when $k$ is close to $n$, showing that the structure becomes helpful when the read-flexibility in choosing $k$ MUs decreases. On the other hand, when the redundancy becomes larger (i.e., when $n$ becomes large compared to $k$), the uniform and the cyclic placement policies exhibit similar performance, each with a slight advantage at different $L$ values. When computational complexity is taken into account, the cyclic placement becomes superior over uniform, because it outperforms the low-complexity greedy read algorithm.

\begin{figure*}
        \centering
        \begin{subfigure}[b]{0.46\textwidth}
                \includegraphics[scale=0.57]{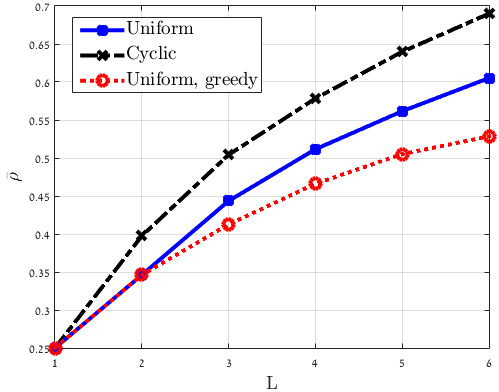}
                \caption{$k=n=3$ (no coding).}
                \label{k3n3_tp}
        \end{subfigure}%
        ~ 
        \begin{subfigure}[b]{0.46\textwidth}
                \includegraphics[scale=0.57]{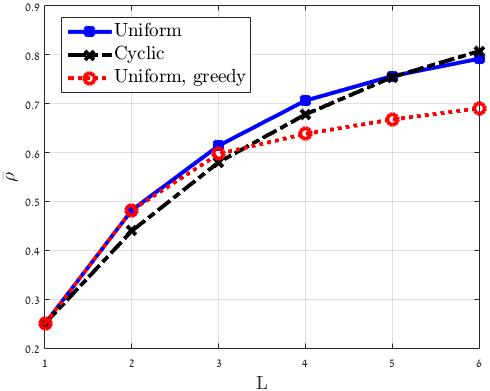}
                \caption{$k=3, n=4.$}
                \label{k3n4_tp}
        \end{subfigure}
        ~ 

\vspace{10pt}
        \begin{subfigure}[b]{0.46\textwidth}
                \includegraphics[scale=0.57]{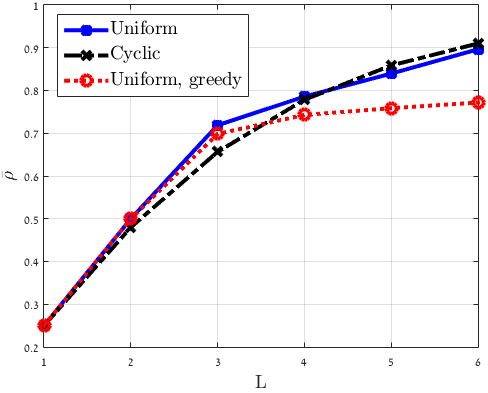}
                \caption{$k=3, n=5.$}
                \label{k3n5_tp}
        \end{subfigure}
             \begin{subfigure}[b]{0.46\textwidth}
                \includegraphics[scale=0.57]{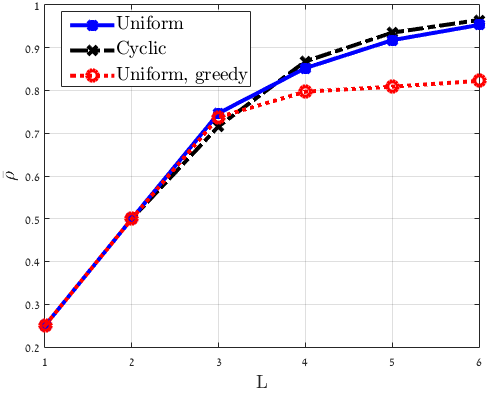}
                \caption{$k=3, n=6.$}
                \label{k3n6_tp}
        \end{subfigure}

        \caption{Average throughput $\bar \rho$ performance comparison ($N=12$). \label{tp_results}
}
\end{figure*}

Another performance measure we investigated is the number of packets that are read (i.e., the value of $L^*$) with high probability (w.h.p.) in a random instance. In Fig. \ref{whp_results}, we show $L^*$ values that were observed with probability at least $0.95$. Similarly to the results in Fig. \ref{k3n3_tp}, the cyclic scheme provides better $L^*$ performance in the uncoded case. For moderate $n$ values, the uniform policy is better than cyclic, where for larger $n$ values the performance becomes close. Regardless of the placement policy in use, coding improves the throughput performance. For example, when $k=n=3$ (Fig. \ref{k3n3_whp}) and the load is $L=3$, we expect to get w.h.p. only one packet at the output. On the other hand, when coding is introduced such that $k=3$ and $n=4$, this increases to $L^*=2$ packets and keeps improving up to $L=L^*=3$ for $k=3$ and $n=6$. That is, when the switch is required to fulfill all $L$ requests w.h.p., coding is an important tool.

\begin{figure*}
        \centering
        \begin{subfigure}[b]{0.46\textwidth}
                \includegraphics[scale=0.57]{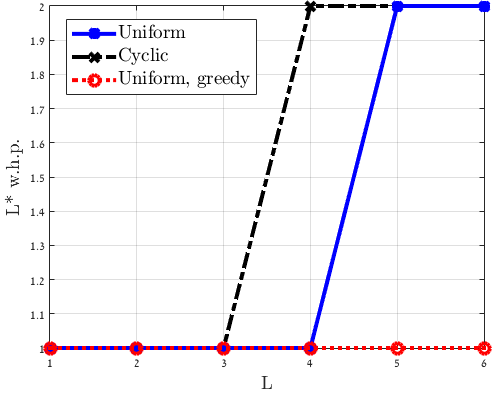}
                \caption{$k=n=3$ (no coding).}
                \label{k3n3_whp}
        \end{subfigure}%
        ~ 
        \begin{subfigure}[b]{0.46\textwidth}
                \includegraphics[scale=0.57]{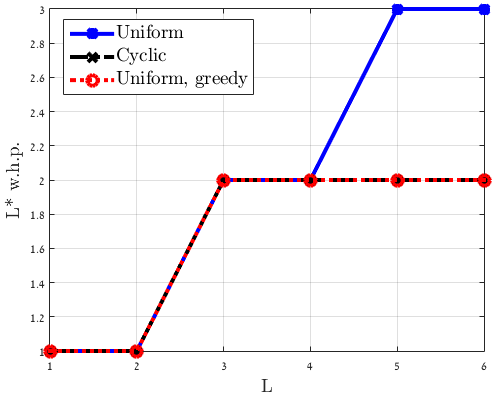}
                \caption{$k=3, n=4.$}
                \label{k3n4_whp}
        \end{subfigure}
        ~ 

\vspace{10pt}
        \begin{subfigure}[b]{0.46\textwidth}
                \includegraphics[scale=0.57]{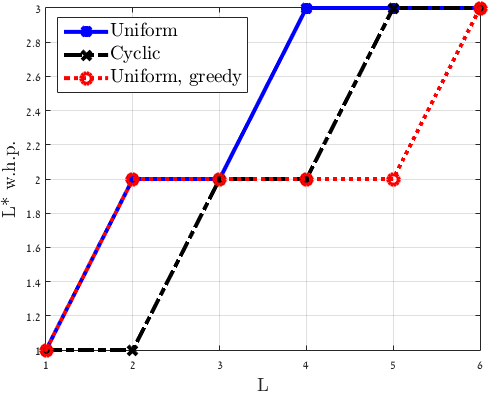}
                \caption{$k=3, n=5.$}
                \label{k3n5_whp}
        \end{subfigure}
             \begin{subfigure}[b]{0.46\textwidth}
                \includegraphics[scale=0.57]{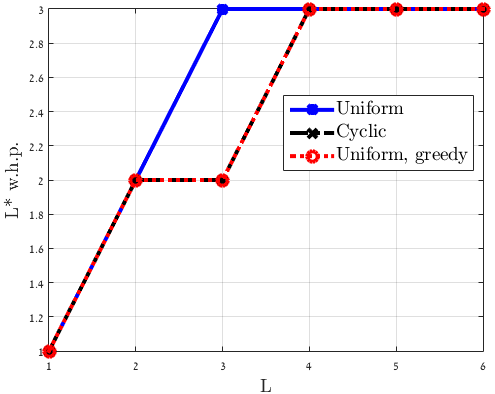}
                \caption{$k=3, n=6.$}
                \label{k3n6_whp}
        \end{subfigure}

        \caption{The expected $L^*$ with probability at least $0.95$ as a function of the load $L$ ($N=12$). \label{whp_results}
}
\end{figure*}

In Fig. \ref{fig:full_tp_comp}, we compare the probability of a full-throughput solution (i.e., $L^*=L$ solution) for the uniform, cyclic and design placement policies for $k=3, n=5$ and $L=3$. We note that in the design case, the probability is obtained analytically using the balls and bins model (See Section \ref{subsec:design_prob}), given the number of valid MU sets (blocks). To find the number of blocks, we used constant-weight codes with $n=5$ and $d=2\left( {n - {t_{\max }}} \right) = 6$ (see Section \ref{subsec:design_placement}). The (exact) number of blocks (i.e., the maximum number of codewords) in this case is known for $N$ values up to $17$ \cite{Agrell}. The graphs in Fig. \ref{fig:full_tp_comp} show that the uniform placement is somewhat better than the design placement in terms of full-throughput probability. The reason is the large number of valid MU sets in the uniform case, which is ${N \choose n}$, compared to the number of blocks in the design case, which is typically much smaller (e.g., $68$ blocks when $N=17$ compared to $6188$ subsets). However, as no efficient read algorithm is known for the uniform placement, an exhaustive-search algorithm requires ${2^{Ln}} = {2^{15}}$ operations to find an optimal solution. On the other hand, the efficient optimal read algorithm in the design case requires only $L\left( {n + L} \right) = 24$ operations, i.e., a number smaller by four orders of magnitude. This makes the design placement policy appealing in practice due to complexity considerations.

\begin{figure}
\centering
\includegraphics[scale=0.62]{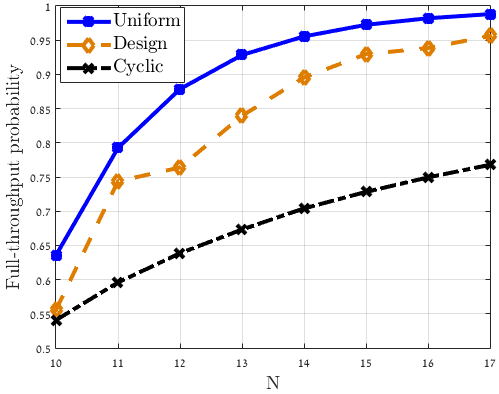}
\caption{Full-throughput probability for $k=3, n=5, L=3$ as a function of $N$.}
\label{fig:full_tp_comp}
\end{figure}

\section{Conclusion}
\label{sec:conclusions}

In this paper, we studied placement policies and read algorithms of coded packets in a switch memory. The study revealed that coding can significantly improve switching throughput, and that the choice of placement has significant effect on performance and complexity. We proved that in its most general form, the problem of obtaining maximum throughput for a set of requested packets is a hard problem. Therefore, we moved to propose two practical placement policies and efficient optimal read algorithms, with better throughput performance in certain cases compared to the general non-structured placement policy.

We demonstrated tradeoffs between write flexibility, read-algorithm complexity and performance. In particular, we saw that no choice of placement policy is universally optimal, and provided analytic tools for choosing a policy wisely. Our work leaves many interesting problems for future research. For example, one may consider other structured write policies by imposing different constraints rather than restricting pairwise intersections. It is also interesting to consider variable-length codes (i.e., varying values of $k$ and $n$ values for each packet), to match the expected switch load.

\bibliographystyle{IEEEtran}

\bibliography{MDS_paper}

\appendices

\section{Detailed proof of Theorem \ref{np_hard}}
\label{np_hard_detailed_proof}
To show the hardness of nkMTP when $3 \le k \le n$, we define its decision-problem version, which we name $M$-nkMTP. In the rest of this appendix, we assume that $3 \le k \le n$.

\begin{problem}{($M$-nkMTP)}
\label{nkMTP_decision}

\textbf{Input}: An nkMTP instance and a positive integer $M$.

\textbf{Output}: "Yes" if there are $M$ subsets $\mathcal{S}'_i \subseteq \mathcal{S}_i$ with the properties $\left| \mathcal{S}'_i \right| = k$, $\mathcal{S}'_i \cap \mathcal{S}'_j = \emptyset$ ($i \ne j$).
\end{problem}
For showing that nkMTP is NP-hard we can equivalently show that $M$-nkMTP is NP-complete. Note that $M$-nkMTP is in NP, since once we are given a collection of $M$ subsets $S'_i \subseteq \mathcal{S}_i$ claimed to be pairwise disjoint, this can be validated in polynomial time. It remains to reduce a known NP-complete problem to $M$-nkMTP, meaning that we have to show that an efficient solution to $M$-nkMTP implies an efficient solution to this NP-complete problem. We will reduce the \textit{$l$-set packing} problem ($l$-SP), known to be NP-complete for $l \ge 3$  \cite{Hazan}, to our problem. $l$-SP is defined as follows.
\begin{problem}{($l$-SP)}
\label{l_SP}

\textbf{Input}: A collection of sets over a certain domain, each of them of size $l$, and a positive integer $M$.

\textbf{Output}: "Yes" if there are $M$ pairwise disjoint sets.
\end{problem}
$M$-nkMTP is NP-complete for $3 \le k = n$, since in this case $M$-nkMTP and $l$-SP, for $l=k=n$, are essentially the same. Therefore, it remains to reduce $l$-SP ($l \ge 3$) to $M$-nkMTP for $3 \le k < n$. Let us begin with reducing $l$-SP to $M$-nkMTP with $k=l, n = k+1$.

Consider an instance of $l$-SP with $l=k$, with $M$ denoting the number of pairwise disjoint subsets required in the solution. Assume that the input to $l$-SP are $L$ sets $\mathcal{A}_i$ ($i=1,2,...,L$), where the elements contained in $\mathcal{A}_i$ are $\bigcup\limits_i {{\mathcal{A}_i}}  = \left\{ {{a_1},{a_2},...,{a_s}} \right\}$. For building an instance of $M$-nkMTP with $k=l, n=k+1$, do the following:

\begin{itemize}
\item Build sets $\mathcal{B}_i$, each of size $k$, from $s$ new elements $\left\{ {{b_1},{b_2},...,{b_s}} \right\}$, such that a one-to-one correspondence between the elements in $\mathcal{A}_i$ and the elements in $\mathcal{B}_i$ exists: ${a_j} \in {\mathcal{A}_i} \Leftrightarrow {b_j} \in {\mathcal{B}_i}$.
\item Add a new element, say $\theta$, which does not belong to either $\mathcal{A}_i$ or $\mathcal{B}_i$, to both sets to obtain the new sets denoted by $\tilde{\mathcal A}_i$ and $\tilde{\mathcal B}_i$.
\end{itemize}
The input to $M$-nkMTP with $n=k+1$ will be the sets $\tilde{\mathcal A}_i$ and $\tilde {\mathcal B}_i$, where we ask whether there exist $2M$ subsets of size $k$ each that are pairwise disjoint. If $l$-SP provides a solution of size $M$ for the sets $\mathcal{A}_i$, then clearly the sets ${\mathcal{A}_i} \subseteq \tilde {\mathcal A}_i, {\mathcal{B}_i} \subseteq \tilde {\mathcal B}_i$ serve as solution of size $2M$ to $M$-nkMTP with $n=k+1$. On the other hand, if there exists a solution of size $2M$ in the $M$-nkMTP problem, we have three cases:
\begin{enumerate}
\item $M$ subsets $\mathcal{A}{'_i} \subseteq \tilde {\mathcal A}_i$ and $M$ subsets $\mathcal{B}{'_i} \subseteq \tilde{\mathcal B}_i$ appear in the solution. The element $\theta$ can appear in only one of the subsets, since they must be pairwise disjoint. If $\theta$ belongs to some $\mathcal{A}{'_i}$, then we have $M$ subsets ${\mathcal{B}}{'_i}$ that provide a solution to $l$-SP (after transforming the elements in $\mathcal{B}{'_i}$ to the their corresponding elements in ${\mathcal{A}}{'_i}$). On the other hand, if $\theta$ belongs to some $\mathcal{B}{'_i}$, then the solution is the sets $\mathcal{A}{'_i}$.
\item $M_1$ subsets $\mathcal{A}{'_i} \subseteq \tilde {\mathcal A}_i$ and $M_2$ subsets $\mathcal{B}{'_i} \subseteq \tilde {\mathcal B}_i$ appear in the solution, where $M_1 < M_2$ and $M_1 + M_2 = 2M$. $\theta$ can appear in at most one of the subsets $\mathcal{B}'_i$. In addition, $M < M_2$, and therefore choosing the subsets $\mathcal{B}'_i$ that do not contain $\theta$ leads to a solution of $l$-SP with at least $M$ subsets (again, transformation to the elements of $\mathcal{A}_i$ is required).
\item $M_1$ subsets $\mathcal{B}{'_i} \subseteq \tilde {\mathcal B}_i$ and $M_2$ subsets $\mathcal{A}{'_i} \subseteq \tilde {\mathcal A}_i$ appear in the solution, where $M_1 < M_2$ and $M_1 + M_2 = 2M$. A solution of size at least $M$ to $l$-SP is obtained in a similar way to the previous case.
\end{enumerate}

The transformation $\mathcal{A}_i, \mathcal{B}_i \to \tilde{\mathcal A}_i,\tilde {\mathcal B}_i$ is polynomial in $L$, since it merely requires to build $L$ sets of size $k$ and to add one element to each of the resulting $2L$ sets. Thus, the reduction described above is a polynomial time reduction. Therefore, $M$-nkMTP is NP-complete for $k \ge 3, n = k+1$, and it remains to show that $M$-nkMTP is NP-complete for $k \ge 3, n > k+1$. Consider $M$-nkMTP with $k \ge 3, n = k+2$. We can reduce $M$-nkMTP with $k \ge 3, n = k+1$ (which we proved to be NP-complete) to $M$-nkMTP with $k \ge 3, n = k+2$, similarly to the reduction of $l$-SP to $M$-nkMTP with $k=l, n=k+1$ that was described earlier. Continuing in the same fashion, we are able to reduce $M$-nkMTP with $n=k+j$ ($k \ge 3, j \ge 1$) to $M$-nkMTP with $n=k+j+1$. Finally, we deduce that $M$-nkMTP is NP-complete for $3 \le k \le n$, meaning that nkMTP (the optimization version of $M$-nkMTP) is NP-hard. \qed

\section{$p_{\rm cover}^{\rm uni}$ and $p_{\rm cover}^{\rm cyc}$}
\label{appendix:p_cover}

The following derivation of $p_{\rm cover}^{\rm uni}$ (see Section \ref{subsec:uniform_prob}) is based on the union model \cite{QPEC}. Define the function:
\begin{equation}
{I_m}\left( {i,n} \right)\\\nonumber = \sum\limits_{j = 0}^{\min(i,n)-m} {{{\left( { - 1} \right)}^j} \cdot \nu_{m+j}\left( {i,n} \right)\cdot{m+j \choose m} },
\end{equation}
where
\begin{equation}
\nu_{m+j}\left( {i,n} \right) = {N \choose m+j} \cdot {N-(m+j) \choose i-(m+j)} \cdot {N-(m+j) \choose n-(m+j)},
\end{equation}
for $i=0,1,...,N$. $I_m$ is the number of ways to realize two sets of cardinalities $i$ and $n$, taken from a set of $N$ elements, such that their intersection is of cardinality $m$. Define the probability distribution $R_m$:
\begin{equation}
{R_m}\left( {i,n} \right) = \frac{{{I_m}\left( {i,n} \right)}}{{N \choose i} \cdot {N \choose n}},
\end{equation}
which is the probability that two sets of cardinalities $i$ and $n$, taken uniformly at random from a set of $N$ elements, have an intersection of cardinality $m$. Define the following $\left( {N + 1} \right) \times \left( {N + 1} \right)$ Markov matrix, with indices $i,j$ ranging from $0$ to $N$:
\begin{equation}
{\left( {\bf \Gamma}  \right)_{i,j}} = {R_{i + n - j}}\left( {i,n } \right).
\end{equation}
The $(i,j)$ entry of ${\bf{\Gamma}}$ is the probability that the union of a set with $i$ elements and a set with $n$ elements is of cardinality $j$. Finally,  $p_{\rm cover}^{\rm uni}$ is the sum of the first $kL$ entries in the first row of ${{\bf{\Gamma}}^L}$  (we assume that $kL \le N$).

To obtain $p_{\rm cover}^{\rm cyc}$, we use the probability distribution on $V$, the number of vacant points on the circle when $L$ random arcs are placed without replacement. A closed-form expression for this distribution is given in Theorem 1 in \cite{Barlevy}. This expression is rather long and depends on the parameter range so we do not provide this here. We are actually interested in the number of \textit{non}-vacant points (i.e., how many MUs are covered by the packets) which is the probability distribution of $N-V$. As in our case the arcs are taken \textit{with} replacement, we condition the probability distribution of $N-V$ by the probability distribution on the number of distinct arcs among $L$ random arcs using the balls-and-bins model (i.e., \eqref{balls_and_bins} with $b=L$). We note here that sampling with replacement is discussed as well in Chapter 4.1 of \cite{Barlevy}.

\end{document}